\newtheorem{thm}{Theorem}[section]
\newtheorem{problem}[thm]{Problem}
\newtheorem{lem}[thm]{Lemma}
\newtheorem{defn}[thm]{Definition}
\theoremstyle{definition}
\NewDocumentCommand{\binomial}{omm}
 {%
  \genfrac(){0pt}{}{#2}{#3}%
  \IfValueT{#1}{_{\!#1}}%
 }
\NewDocumentCommand{\eulerian}{omm}
 {%
  \genfrac<>{0pt}{}{#2}{#3}%
  \IfValueT{#1}{_{\!#1}}%
 }
\def \s {\sigma}
\title{Compatible Cycles and CHY Integrals}
\author[a]{Freddy Cachazo,}\emailAdd{fcachazo@pitp.ca}
\author[b]{Karen Yeats,}\emailAdd{kayeats@uwaterloo.ca}
\author[b]{and Samuel Yusim}\emailAdd{syusim@edu.uwaterloo.ca}
\affiliation[a]{Perimeter Institute for Theoretical Physics, Waterloo, ON N2L 2Y5, Canada}
\affiliation[b]{Department of Combinatorics $\&$ Optimization, University of Waterloo, Waterloo, ON N2L 3G1, Canada}
\abstract{The CHY construction naturally associates a vector in $\mathbb{R}^{(n-3)!}$ to every 2-regular graph with $n$ vertices. Partial amplitudes in the biadjoint scalar theory are given by the inner product of vectors associated with a pair of cycles. In this work we study the problem of extending the computation to pairs of arbitrary 2-regular graphs. This requires the construction of {\it compatible} cycles, i.e. cycles such that their union with a 2-regular graph admits a Hamiltonian decomposition. We prove that there are at least $(n-2)!/4$ such cycles for any 2-regular graph. We also find a connection to breakpoint graphs when the initial 2-regular graph only has double edges. We end with a comparison of the lower bound on the number of randomly selected cycles needed to generate a basis of $\mathbb{R}^{(n-3)!}$, using the super Catalan numbers and our lower bound for compatible cycles.}
\begin{document}
\maketitle
\addtocontents{toc}{\protect\setcounter{tocdepth}{1}}
\def \tr {\nonumber\\}
\def \la  {\langle}
\def \ra {\rangle}
\def\hset{\texttt{h}}
\def\gset{\texttt{g}}
\def\sset{\texttt{s}}
\def \be {\begin{equation}}
\def \ee {\end{equation}}
\def \ba {\begin{eqnarray}}
\def \ea {\end{eqnarray}}
\def \k {\kappa}
\def \h {\hbar}
\def \r {\rho}
\def \l {\lambda}
\def \be {\begin{equation}}
\def \en {\end{equation}}
\def \bes {\begin{eqnarray}}
\def \ens {\end{eqnarray}}
\def \red {\color{Maroon}}
\def \pt {{\rm PT}}
\def \s {\textsf{s}}
\def \t {\textsf{t}}
\def \ls {{\rm LS}}
\def \ma {\Upsilon}
\def \ort {\textsf{O}}

\numberwithin{equation}{section}

\section{Introduction}

Scattering amplitudes of massless particles are very constrained by physical requirements such as locality and unitarity (see e.g. \cite{Elvang:2013cua,Benincasa:2007xk}). In 2013, He, Yuan and one of the authors, introduced the CHY formalism which encodes locality and unitarity into the structure of the moduli space of punctured Riemann spheres \cite{Cachazo:2013gna,Cachazo:2013hca,Cachazo:2013iea}. The CHY formula has become a powerful tool for producing amplitudes of a variety of theories, including gravity, in arbitrary dimensions \cite{Cachazo:2014xea,Geyer:2015jch,delaCruz:2015raa,Geyer:2016wjx,Feng:2016nrf}. Moreover, it leads to ways of combining amplitudes of two theories to produce new ones \cite{Cachazo:2013iea} generalizing the  Kaway-Lewellen-Tye (KLT) construction  \cite{Kawai:1985xq} discovered in the 80's. The key ingredient in the CHY reformulation of KLT-like relations is the set of amplitudes of a cubic scalar theory with $U(N)\times U(\tilde N)$ flavor group. The Lagrangian of the theory is given by
\be\label{biadj}
{\cal L} = \partial_\mu \Phi_{a\tilde a}\partial^\mu\Phi^{a\tilde a} + g f^{abc}{\tilde f}^{\tilde a\tilde b\tilde c}\Phi_{a\tilde a} \Phi_{b\tilde b}\Phi_{c\tilde c}
\ee
where $f^{abc}$ and ${\tilde f}^{\tilde a\tilde b\tilde c}$ are the structure constants of the flavor group \cite{Cachazo:2013iea}.

It is well-known that scattering amplitudes of $n$ particles in the adjoint representation of a unitary group can be decomposed into partial amplitudes labeled by a cycle, i.e., a connected 2-regular graph on $n$ vertices \cite{tHooft:1973alw,Berends:1987cv,Mangano:1987xk}. The theory defined by \eqref{biadj} has two unitary groups and therefore its partial amplitudes are labeled by two cycles $C_\alpha$ and $C_\beta$ on $n$ vertices and usually denoted by $m_n(\alpha|\beta)$. Here we choose to make the dependence on the cycles explicit when necessary by writing $m_n(C_\alpha|C_\beta)$.

We attempt to use graph language in a way that is broadly both consistent with graph theory and previous work in the CHY formalism, as will be summarized at the end of the introduction.
The CHY formulation starts by defining a map from the set of 2-regular loopless graphs, including multigraphs, to an $(n-3)!$-dimensional real vector space
\be
\phi: {\cal G}^{\hbox{{\small 2-reg}}} \rightarrow \mathbb{R}^{(n-3)!}.
\ee
We will refer to graphs in the set ${\cal G}^{\hbox{{\small 2-reg}}}$ simply as 2-regular graphs. The map has the following crucial property: Given any pair of 2-regular graphs, $G_1$ and $G_2$, on the same vertex set, the inner product $\phi(G_1)\cdot \phi(G_2)$ only depends on the 4-regular graph obtained by the edge-disjoint union $G_1\cup G_2$. More explicitly, if $G_1\cup G_2$ admits a different decomposition in terms of a pair of 2-regular graphs, i.e., $G_1\cup G_2=G_3\cup G_4$ then $\phi(G_1)\cdot \phi(G_2)=\phi(G_3)\cdot \phi(G_4)$. The amplitudes of the biadjoint theory are then given by $m_n(C_\alpha|C_\beta) = \phi(C_\alpha)\cdot \phi(C_\beta)$.

In \cite{Cachazo:2015nwa}, Gomez and one of the authors noticed that the natural extension $m_n(G_1|G_2)=\phi(G_1)\cdot \phi(G_2)$ can be expressed completely in terms of $m_n(\alpha|\beta)$ if a certain condition is satisfied.

In order to state the condition a definition is needed.  

\begin{defn}\label{comp}
Given a 2-regular graph $G$, a \emph{compatible cycle} to $G$ is a cycle $C$ such that the 4-regular graph obtained by the edge-disjoint union $G\cup C$ admits a hamiltonian decomposition, i.e., $G\cup C = C_1\cup C_2$ where $C_1$ and $C_2$ are both cycles on the same vertex set as $G$.
\end{defn}
See the end of the section for the general definition of a hamiltonian decomposition.

The construction of $m_n(G_1|G_2)$ in terms of $m_n(C_\alpha|C_\beta)$ requires solving the following:

\begin{problem}\label{main problem}
Given a 2-regular graph $G$ on $n$ vertices, find at least $(n-3)!$ compatible cycles such that under $\phi$ they form a basis of $\mathbb{R}^{(n-3)!}$.
\end{problem}

The reason is that if such a basis is found then the vector $\phi(G)$ can be expanded in terms of any basis of cycles, already known to exist, but with coefficients which can be computed entirely in terms of $m_n(C_\alpha|C_\beta)$, by using $\phi(G)\cdot \phi(C)$ with $C$ compatible to $G$ to produce linear equations for the coefficients. In section 2 we provide details on this construction.

In this work we study the combinatorial part of the problem and prove the following theorem.

\begin{thm}
Given a 2-regular graph $G$ on $n$ vertices, there are at least $(n-2)!/4$ compatible cycles for $G$.  In the case that $G$ has only even cycles then there are at least $(n-2)!/2$ compatible cycles for $G$.
\end{thm}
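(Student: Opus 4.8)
The plan is to produce enough compatible cycles explicitly, by fixing one cycle of a Hamiltonian decomposition and letting the other one vary. A preliminary reduction handles the connected case: if $G$ is connected then $G$ is itself a Hamiltonian cycle, so for \emph{every} Hamiltonian cycle $C$ the pair $\{G,C\}$ is already a Hamiltonian decomposition of $G\cup C$; since there are $(n-1)!/2\ge (n-2)!/2$ Hamiltonian cycles on $n$ vertices, the theorem holds in that case. So assume $G=D_1\sqcup\cdots\sqcup D_k$ with $k\ge 2$ and $\ell_i=|V(D_i)|\ge 2$.

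For the construction, choose for each $i$ an edge $f_i=a_ib_i\in E(D_i)$, so that $P_i:=D_i-f_i$ is a Hamiltonian path of $D_i$ with ends $a_i,b_i$. Fixing a cyclic order and an orientation of these $k$ paths, splice them into a Hamiltonian cycle $C_1$ on the vertex set of $G$ whose edge set is $\bigl(E(G)\setminus\{f_1,\dots,f_k\}\bigr)\cup N$, where $N$ is the set of $k$ edges used to join consecutive paths. Now let $C_2$ range over the Hamiltonian cycles that contain every $f_i$ and avoid $N$, and set
\[
 C\;:=\;\bigl(E(C_1)\cup E(C_2)\bigr)\setminus E(G)\;=\;N\cup\bigl(E(C_2)\setminus\{f_1,\dots,f_k\}\bigr).
\]
A direct check gives the multiset identity $E(C_1)\cup E(C_2)=E(G)\cup C$, so whenever $C$ happens to be a single cycle it is compatible with $G$, witnessed by the decomposition $\{C_1,C_2\}$; moreover for a fixed $C_1$ the map $C_2\mapsto C$ is injective, since $E(C_2)=(C\setminus N)\cup\{f_1,\dots,f_k\}$. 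Hence it suffices to count, over a suitable family of choices of $C_1$, the cycles $C_2$ for which $C$ is connected, and then to bound the multiplicity with which a given $C$ is produced.

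The combinatorial heart is analysing when $C$ is connected. Because $C_2$ contains each $f_i=a_ib_i$, the vertices $a_i$ and $b_i$ are consecutive on $C_2$; thus $C_2$ is encoded by a cyclic order of the $k$ pairs $\{a_i,b_i\}$, an orientation of each pair, and a distribution — with internal orders — of the remaining $n-2k$ vertices into the $k$ gaps between consecutive pairs. Deleting the $f_i$ from $C_2$ and reinserting the edges of $N$ amounts to replacing the matching of the $2k$ ``cut vertices'' $\{a_i,b_i\}$ induced by $C_2$ with the fixed matching induced by $N$, and $C$ is a single cycle exactly when the union of the $C_2$-matching with the $N$-matching is a single $2k$-cycle on the cut vertices, i.e. when a certain permutation of the $k$ pairs is a $k$-cycle. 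A short permutation-and-parity argument shows this holds for a positive fraction of the admissible $C_2$, once the splicing of $C_1$ — in particular the orientations of the paths $P_i$ — is chosen compatibly, and that the number of usable splicings (equivalently, the size of this fraction) is larger by a factor of $2$ when all $\ell_i$ are even; this is the source of the improvement from $(n-2)!/4$ to $(n-2)!/2$. Assembling the count — the distribution of the $n-2k$ free vertices supplies the factorial growth, the choices of the $f_i$ and of the splicing supply the rest, and one divides by the bounded multiplicity — yields at least $(n-2)!/4$ compatible cycles in general, and at least $(n-2)!/2$ when $G$ has only even cycles.

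I expect the last quantitative step to be the main obstacle: pinning down exactly which splicings and which matchings of the cut vertices make $C$ connected, carrying the parity bookkeeping for odd components, and checking that the de-duplication costs no more than the available slack. The special case in which every $D_i$ is a double edge ($\ell_i=2$, so $C_2$ has no free vertices and the whole problem lives on the $2k=n$ cut vertices) is a convenient testing ground, and there the re-splicing count is precisely the breakpoint-graph count referred to in the introduction.
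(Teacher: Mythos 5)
Your setup is sound as far as it goes: the multiset identity $E(C_1)\cup E(C_2)=E(G)\cup C$ is correct, the reduction of ``$C$ is a compatible cycle'' to ``the matching $N$ together with the pairing of cut vertices induced by the paths of $C_2-\{f_1,\dots,f_k\}$ forms a single cycle on the $2k$ cut vertices'' is the right way to analyse connectivity, and the connected case is trivially fine. But the proof has a genuine gap, and it sits exactly at the theorem's quantitative heart, which you yourself flag as ``the main obstacle.'' The claims that a ``positive fraction'' of admissible $C_2$ give a connected $C$, that the usable splicings double when all cycles are even, and that the multiplicity of each produced $C$ is bounded, are all asserted rather than proved. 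Moreover, a positive fraction would not by itself suffice: for a \emph{fixed} choice of $f_i$ and $C_1$, the number of Hamiltonian cycles $C_2$ containing all $f_i$ is only of order $2^{k-1}(n-k-1)!$, which is far below $(n-2)!/4$ when $G$ has many cycles (in the all-bubbles case $k=n/2$ this is roughly $2^{n/2}(n/2-1)!$). So the bound can only be reached by aggregating over many choices of the $f_i$ and of the splicing $C_1$, and your injectivity statement covers only the map $C_2\mapsto C$ at fixed $C_1$; the multiplicity with which a single compatible cycle arises from \emph{different} pairs $(\{f_i\},C_1)$ is exactly the de-duplication you would need to control and is left open. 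Likewise the parity bookkeeping that is supposed to produce the factor $2$ separating the even-only bound $(n-2)!/2$ from the general bound $(n-2)!/4$ is not carried out. As it stands, the argument is a plausible programme, not a proof.

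For comparison, the paper sidesteps the ``fraction of re-splicings that are connected'' problem entirely. For $G$ with all cycles even it fixes a decomposition $G=A\cup B$ into two perfect matchings and builds each compatible cycle as $P\cup Q$ for perfect matchings $P,Q$: a first lemma gives exactly $(n-2)!!$ matchings $P$ with $P\cup A$ a cycle, and the key second lemma, proved by a two-vertex-deletion induction, gives at least $(n-3)!!$ matchings $Q$ with both $Q\cup B$ and $Q\cup P$ cycles; dividing by $2$ for the symmetry between $P$ and $Q$ yields $(n-2)!!\,(n-3)!!/2=(n-2)!/2$. That induction is what replaces your unproven connectivity-fraction estimate, and it produces cycles by construction, so no connectivity condition ever needs to be verified a posteriori. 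Odd cycles are then handled by ``bandaging'': contracting one edge at a chosen vertex of each odd cycle to reach the all-even case, re-inserting the contracted vertices into suitable $P$- or $Q$-edges, and recovering most of the lost powers of $2$ by showing that distinct assignments of the subdivided edges to $A$ or $B$ give distinct compatible cycles. If you want to salvage your approach, the place to invest effort is a lemma playing the role of the paper's inductive matching lemma: an explicit lower bound, uniform in the cycle type of $G$, on the number of admissible $C_2$ whose cut-vertex pairing merges with $N$ into a single cycle, together with an upper bound on the cross-$C_1$ multiplicity.
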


The proof is constructive and provides an algorithm for finding the compatible cycles. Note that $(n-2)!/4 \geq (n-3)!$ for $n\geq 6$ and so while we do not solve problem~\ref{main problem} as we do not have a combinatorial handle on the linear independence, the number of compatible cycles is favorable. For $n<6$ there are also many compatible cycles as computed exactly by one of us with Gomez in \cite{Cachazo:2015nwa}; in particular the explicit computation gives a basis of $\mathbb{R}^{(n-3)!}$ for all $n\leq 6$ cases.

Another reason to be optimistic about the future resolution of the linear independence problem is the work of Bjerrum-Bohr, Bourjaily, Damgaard, and Feng, \cite{Bjerrum-Bohr:2016axv}, in which monodromy relations expressed in terms of cross ratios were used to find an algorithm for the expansion of $\phi(G)$ in term of a basis of cycles, although the coefficients are not manifestly given in terms of $m_n(C_\alpha|C_\beta)$. We give more details on their construction in section 2.

The paper starts in section 2 with a brief review of the Feynman diagram definition of $m_n(\alpha|\beta)$ and the formula for defining $m_n(G_1|G_2)$ which uses the compatible cycles. This section can be skipped in a first reading of the paper in case the reader is only interested in the proof of the result for 2-regular graphs. In section 3, we provide a simple construction which not only gives a lower bound for the number of compatible cycles which is larger than $(n-3)!$  but also an algorithm to find them. In section 4 we establish a connection to breakpoint graphs. We end in section 5 with a short discussion on the issue of finding a basis of $\mathbb{R}^{(n-3)!}$ by using super Catalan numbers to give a lower bound on the number of randomly selected cycles needed to generate a basis of $\mathbb{R}^{(n-3)!}$. This counting indicates that the larger the $n$ the harder it is to find a linear independence basis.  We discuss some modifications to the original algorithm of \cite{Cachazo:2015nwa} and give an outlook with future directions.

\subsection{Review of Graph Theory Terminology}

We end the introduction with a short review of graph theory terminology. Readers are encourage to skip this in a first reading and only use it if needed. %Graph theory definitions used in this work will be {\it emphasized} the first time they are used to indicate that their definition can be found in this review.  

A graph is loopless if it has no edge with both ends at the same vertex.

For us graphs may have multiple edges (hence being multigraphs in the usual graph theoretic sense), but must be loopless.  

\begin{defn}
A graph is $k$-regular if all vertices have degree $k$, that is, have $k$ edges ending on them.
\end{defn}
We are particularly interested in 2-regular graphs, which are simply a collection of cycles.

As used above given two graphs $G_1$ and $G_2$ on the same vertex set we will write $G_1\cup G_2$ for the graph whose edges are the disjoint union of the edges of $G_1$ and the edges of $G_2$.  In particular if the same edge appears in $G_1$ and $G_2$ then that edge will be a double edge in $G_1\cup G_2$.

\begin{defn}
A hamiltonian cycle in a graph $G$ is a subgraph of $G$ which is a cycle and which uses each vertex of $G$ exactly once.

Given a $2k$-regular graph $G$, a hamiltonian decomposition of $G$, when it exists, is a decomposition of the edges of $G$ into $k$ disjoint hamiltonian cycles: $G=C_1\cup C_2\cup \cdots \cup C_k$ with each $C_j$ a cycle on the same vertex set as $G$. 
\end{defn}

For the main argument we also need the notion of a perfect matching.  
\begin{defn}
A matching in a graph $G$ is a 1-regular subgraph, that is, a subset of edges of the graph where no two edges of the subset share a vertex. 

A perfect matching in a graph $G$ is a matching that uses all vertices of the graph.  We will also use the notion of perfect matching on a vertex set (without the requirement of being a subgraph of some $G$), meaning simply a 1-regular graph on that vertex set.

Given a perfect matching $M$ in a graph $G$, and a vertex $v$ of $G$, the $M$-neighbour of $v$ is the vertex connected to $v$ by an edge of $M$.
\end{defn}

For more graph theory background the reader is referred to \cite{diestel} or \cite{Bondy:2008:GT:1481153}.

\section{Biadjoint scalar amplitudes and extension to general 2-regular graphs}

In this work we are interested in tree-level scattering amplitudes of a quantum field theory of massless scalars interacting via cubic couplings controlled by the structure constants of the algebra of $U(N)\times U(\tilde N)$. The lagrangian presented in \eqref{biadj} produces Feynman diagrams which can be decomposed according to the algebra structure leading to what is known as a color decomposition of amplitudes into partial amplitudes. Consider the scattering of $n$ particles carrying $U(N)\times U(\tilde N)$ labels $\{a_1,\tilde a_1\},\{a_2,\tilde a_2\},\ldots, \{a_n,\tilde a_n\}$, then the amplitude can be written as
\be
A_n(\{a_i,\tilde a_i\}) =\sum_{\alpha,\beta \in S_n/{\mathbb{Z}_n}}{\rm Tr}\left(T^{a_{\alpha(1)}}T^{a_{\alpha(2)}}\cdots T^{a_{\alpha(n)}}\right){\rm Tr}(\tilde{T}^{\tilde a_{\beta(1)}}\tilde{T}^{\tilde a_{\beta(2)}}\cdots \tilde{T}^{\tilde a_{\beta(n)}})\, m_n(\alpha|\beta).
\ee
Here $T^a$ and $\tilde{T}^{\tilde a}$ are the generators of the Lie algebra of $U(N)$ and  $U(\tilde N)$ respectively, i.e., they form a basis of the space of $N\times N$ (or $\tilde N\times \tilde N$) hermitian matrices.

Each particle carries a momentum vector $k_a^\mu$ and $m_n(\alpha|\beta)$ is only a function of Mandelstam invariants $s_{ab}:=2k_a\cdot k_b$. These invariants form a real $n\times n$ symmetric matrix satisfying the following properties
\be\label{con}
s_{aa} = 0\quad {\rm and}\quad \sum_{b=1}^n s_{ab}=0\quad \forall \; a\in \{1,2,\ldots ,n\}.
\ee
The space of kinematic invariants is $n(n-3)/2$ dimensional.

A tree-level Feynman diagram in a cubic scalar theory is defined as a tree with $n$ leaves and $n-2$ trivalent vertices.  We will assume our Feynman diagrams are tree-level from here on out. To each Feynman diagram $\Gamma$ one associates a rational function of $s_{ab}$ as follows. Let $E_\Gamma$ be the set of edges connecting two trivalent vertices. Removing $e\in E_\Gamma$ divides $\Gamma$ into two disconnected graphs with a corresponding partition of the leaves into two sets $L_e\cup R_e=\{1,2,\ldots ,n\}$. The conditions \eqref{con} imply that
\be
\sum_{a,b\in L_e}s_{ab} = \sum_{c,d\in R_e}s_{cd}
\ee
and therefore it is a quantity that can be associated with the edge $e$.

The rational function associated with $\Gamma$ is then
\be
R_\Gamma(S) :=\prod_{e\in E_\Gamma}\left(\sum_{a,b\in L_e}s_{ab}\right)^{-1}.
\ee
There are trivial factors of $2$ generated from the symmetric way the sums in the denominator were defined and can be eliminated if desired.

Any Feynman diagram $\Gamma$ admits several planar embeddings. A planar embedding is a drawing of $\Gamma$ on a disk such that no lines cross and all leaves are attached to the boundary of the disk. Since we are working with trees, any given planar embedding is uniquely specified by the (cyclic) ordering of the labels $\{1,2,\ldots ,n\}$ on the boundary of the disk.

There are $(n-1)!$ possible cyclic orderings, i.e. distributions of $n$ labels on the boundary of a disk. However, it is convenient to identify two orderings if they are related by a reflection. This means that there are only $(n-1)!/2$ inequivalent ones. Let ${\cal O}$ denote the set of all $(n-1)!/2$ orderings. More precisely,
\be\label{defCycles}
{\cal O}:=\{\omega \in S_n: \omega(1)=1, \, \omega(2)<\omega(n) \}.
\ee
The first condition reduces the $n!$ permutations to $(n-1)!$ by using cyclicity to fix $1$ while the second condition selects one of the two permutations related by a reflection that fixes $1$.

\begin{defn}\label{setCy}
Let $\Omega(\omega)$ be the set of all Feynman diagrams with n leaves that a admit a planar embedding defined by $\omega\in {\cal O}$.
\end{defn}

Now we are ready to give a formula for partial amplitudes in terms of Feynman diagrams
\be\label{mdef}
m_n(\alpha| \beta) := (-1)^{w(\alpha,\beta)}\sum_{\Gamma\in \Omega(\alpha)\bigcap\Omega(\beta)}R_\Gamma(S).
\ee

In this formula the sum is over all Feynman diagrams that admit both a planar embedding defined by $\alpha$ and one defined by $\beta$. The overall sign is not is important for the purposes of this work so we refer the reader to \cite{Cachazo:2013iea} for its definition.

In a nutshell, the CHY formulation of $m_n(\alpha| \beta)$ requires finding the critical points of 
\be
{\cal S}(x_1,x_2,\ldots ,x_n) :=\sum_{1\leq a<b\leq n}s_{ab}\, \log (x_a-x_b).
\ee
There are $(n-3)!$ critical points obtained as solutions to what are known as the scattering equations \cite{Cachazo:2013gna,Cachazo:2013hca,Cachazo:2013iea}
\be\label{sceq}
\frac{\partial {\cal S}}{\partial x_a} = \sum_{b=1,b\neq a}\frac{s_{ab}}{x_a-x_b}=0\quad \forall\; a\in \{1,2,\ldots,n\}.
\ee
Let's denote the $(n-3)!$ solutions as $x_a^{I}$. In general the solutions are complex but when the $s_{ab}$'s are chosen in what is known as the positive region all solutions are real \cite{Cachazo:2016ror}. Given any cycle $C_\alpha$, one constructs a vector $\phi(C_\alpha)\in\mathbb{R}^{(n-3)!}$ whose components are given by
\be\label{defPhi}
\phi(C_\alpha)_I := \frac{K_I}{(x_{\alpha_1}^I-x_{\alpha_2}^I)(x_{\alpha_2}^I-x_{\alpha_3}^I)\cdots (x_{\alpha_n}^I-x_{\alpha_1}^I)},
\ee
where $K_I$ is a function obtained from second derivatives of ${\cal S}$ and is invariant under permutations of labels and hence $\alpha$ independent. Therefore $K_I$ is not relevant to our discussion and we refer the reader to \cite{Cachazo:2013iea} for details.  

Finally, partial amplitudes are computed as
\be\label{weq}
m_n(\alpha| \beta) = \sum_{I=1}^{(n-3)!}\phi(C_\alpha)_I\,\phi(C_\beta)_I.
\ee
We will also use the notation $\phi(C_\alpha)\cdot\phi(C_\beta)$ for the inner product in \eqref{weq}.

Now it is clear how to generalize $\phi$ to a map that assigns to any 2-regular graph a vector in $\mathbb{R}^{(n-3)!}$. Let $G$ be any 2-regular graph with edge set $E$ then
\be
\phi(G)_I:= K_I\prod_{e\in E}\frac{1}{x_{e_i}^I-x_{e_f}^I}.
\ee
Given any two 2-regular graphs $G_1$ and $G_2$ one also defines
\be
m_n(G_1|G_2):=\phi(G_1)\cdot \phi(G_2).
\ee
As mentioned in the introduction the map $\phi$ has the property, which is clear from its definition, that the value of $m_n(G_1|G_2)$ is only a function of the 4-regular graph obtained as the union $G_1\cup G_2$. 

The scattering equations \eqref{sceq} are polynomial equations and are difficult to solve for generic values of $s_{ab}$. This is why it is useful to try and express $m_n(G_1|G_2)$ in terms of $m_n(\alpha|\beta)$, which are known rational functions of $s_{ab}$. One way to achieve this was proposed by Gomez and one of the authors in \cite{Cachazo:2015nwa}. The first step is to choose any basis of $\mathbb{R}^{(n-3)!}$ made out of vectors corresponding to cycles, not necessarily compatible to any $G_i$. For example, it is known that by fixing the position of three labels and permuting the rest one has $(n-3)!$ cycles that generate a basis (see e.g. \cite{Cachazo:2015nwa}). Consider one such sets ${\cal A}= \{(\gamma ,n-2,n-1,n): \gamma \in S_{n-3}\}$ and expand $\phi(G_i)$ in the corresponding basis
\be\label{coeff}
\phi(G_i)=\sum_{\alpha\in {\cal A}} c_{i,\alpha}\phi(C_\alpha).
\ee 
Now, if a basis ${\cal B}_i$ of $\mathbb{R}^{(n-3)!}$ is found using compatible cycles to $G_i$ then it is possible to compute the coefficients $c_{i,\alpha }$ by solving the system of equations
\be\label{expand}
\phi(G_i)\cdot \phi(C_\beta)=\sum_{\alpha\in {\cal A}} c_{i,\alpha}\,\phi(C_\alpha)\cdot \phi(C_\beta)
\ee 
with $C_\beta$ in ${\cal B}_i$. Therefore $\phi(G_i)\cdot \phi(C_\beta) = \phi(C)\cdot \phi(C')$ for some cycles $C$ and $C'$.   

Using \eqref{coeff} one finds that
\be
m_n(G_1|G_2)= \sum_{\alpha,\beta\in {\cal A}}c_{1,\alpha}c_{2,\beta}\,m_n(\alpha,\beta)
\ee
and since all coefficients $c_{i,\alpha}$ are known using \eqref{expand} we have achieved the desired formula.

Let us end this section with a short description of the algorithm from \cite{Bjerrum-Bohr:2016axv} mentioned in the introduction which also achieves an expansion of the form \eqref{coeff} with coefficients given in terms of the invariants $s_{ab}$. The main tool is the monodromy relations expressed in terms of cross ratios \cite{Cardona:2016gon}: For any subset $A\subset \{1,2,\ldots ,n\}$ with $2\leq |A|\leq n-2$ and for any $a\in A$ and $b\in A^{\rm c}=\{1,2,\ldots ,n\}\setminus A$, 
\be\label{mono}
1= -\sum_{c\in A,d\in A^{\rm c}}s_{cd}\frac{(x_a-x_c)(x_d-x_b)}{(x_b-x_c)(x_a-x_d)}.
\ee
In \cite{Bjerrum-Bohr:2016axv} the identity \eqref{mono}, which holds on the support of the scattering equations, is used to write $\phi(G)$ as a linear combination of $2$-regular graphs with less cycles. In other words, \eqref{mono} can be used to fuse cycles. Iterating the procedure until all graphs involved are single cycles gives rise to an expansion of the form \eqref{coeff}, although the coefficients are not manifestly given in terms of $m_n(C_\alpha|C_\beta)$. It would be interesting to try and find a connection between the two kinds of expansions.

\section{Lower bounds on the number of compatible cycles}

In the following arguments we will use the notion of perfect matching in a slightly different way than what is typical in graph theory.  Given a graph $G$, when we refer to a perfect matching on $V(G)$, the vertex set of $G$, we mean any 1-regular graph on the vertices $V(G)$.  So this is not a perfect matching \emph{of} $G$ as it is not a subgraph of $G$.  It is simply a perfect matching of the complete graph on $V(G)$, in other words a 1-regular graph on $V(G)$.

Additionally, to reiterate what was mentioned in the introduction, for us graphs are loopless but can have multiple edges.  Furthermore, when we take the union of two graphs on the same vertex set, this denotes the disjoint union on the edge sets.  That is, if $G_1$ and $G_2$ are graphs on the same vertex set $V$, and both $G_1$ and $G_2$ have one edge between $v_1$ and $v_2$, $v_1, v_2\in V$, then $G_1\cup G_2$ has two edges between $v_1$ and $v_2$.

With this in mind we are ready to count compatible cycles.  We begin by counting compatible cycles to graphs with only even length cycles.

\begin{thm}\label{alleven}
Let $G$ be a 2-regular graph on $n$ vertices which consists of only even cycles.  There are at least $(n-2)!/2$ compatible cycles for $G$.  
\end{thm}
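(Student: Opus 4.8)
The plan is to exhibit a large family of compatible cycles explicitly, indexed by combinatorial data that we can count. Let $G = Z_1 \cup Z_2 \cup \cdots \cup Z_k$ be the decomposition of $G$ into its even cycles, with lengths $2m_1, \dots, 2m_k$ summing to $n$. The key observation is that each even cycle $Z_i$ has a canonical $2$-colouring of its edges: alternating edges form two perfect matchings $A_i$ and $B_i$ of the vertex set $V(Z_i)$, with $A_i \cup B_i = Z_i$. So $G$ itself comes pre-equipped with a perfect matching $A = A_1 \cup \cdots \cup A_k$ and its complement $B = B_1 \cup \cdots \cup B_k$ inside $G$, and $A \cup B = G$. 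The idea is: to build a compatible cycle $C$, I want to choose $C$ so that $G \cup C$ decomposes into two Hamiltonian cycles, and the natural way is to arrange that one of those cycles is $A \cup (\text{half of } C)$ and the other is $B \cup (\text{the other half of } C)$. Concretely, I would look for a perfect matching $C$ of $V(G)$ (a $1$-regular graph on the vertex set, not necessarily in $G$) such that $A \cup C$ is a single Hamiltonian cycle; then automatically $B \cup C$ is also $2$-regular on the same vertices, and I would need to check it too is connected. Since $A \cup C$ and $B \cup C$ are both unions of a perfect matching with a perfect matching, each is $2$-regular, hence a disjoint union of even cycles, and $C = $ (an edge-disjoint complement) will be compatible provided I can force each to be a \emph{single} cycle.

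The main steps, in order. First, reduce to the matching picture: a cycle $C$ compatible with $G$ can be obtained by finding a perfect matching $C$ on $V(G)$ such that both $A\cup C$ and $B\cup C$ are Hamiltonian cycles — then $G \cup C = (A\cup C)\cup(B\cup C)$ wait, that is not right as stated since that uses $C$ twice; instead I should take $C$ to be a cycle whose \emph{edge set} splits as $C = C' \cup C''$ into two matchings with $A \cup C'$ and $B \cup C''$ both Hamiltonian. Actually the cleanest route: look for a \emph{cycle} $C$ on $V(G)$ that is itself built by alternating, so $C = P \cup Q$ with $P, Q$ perfect matchings, chosen so that $A \cup Q$ is a Hamiltonian cycle and $B \cup P$ is a Hamiltonian cycle; then $G \cup C = A \cup B \cup P \cup Q = (B\cup P)\cup(A\cup Q)$ is a Hamiltonian decomposition, and $C = P \cup Q$ is genuinely a cycle (it is a union of two perfect matchings forming one cycle), so $C$ is a compatible cycle. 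Second, count the choices: the data is essentially a cyclic arrangement that "threads through" the matchings $A$ and $B$ appropriately — I expect this to reduce to counting cyclic orderings of the $n$ vertices subject to the alternation constraint, giving on the order of $(n-1)!$ raw choices, which after accounting for the reflection/rotation symmetries built into the definition of a cycle (recall $\mathcal O$ has size $(n-1)!/2$, and there is a further factor of $2$ from the $A \leftrightarrow B$ symmetry of the construction) lands at $(n-2)!/2$. Third, verify that distinct choices of the threading data give distinct cycles $C$ — i.e. the map from data to compatible cycles is injective up to the stated symmetry — so that the count is a genuine lower bound and not an overcount.

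The hard part will be step two and three together: pinning down exactly the right combinatorial index set for the threading data so that (a) every choice provably yields $A \cup Q$ and $B \cup P$ each connected (not just $2$-regular), and (b) the count of that index set, modulo the genuine symmetries, comes out to exactly $(n-2)!/2$ rather than something smaller. Connectedness is the delicate point: a union of two perfect matchings on $n$ vertices is always a disjoint union of even cycles, but forcing it to be a \emph{single} $2n$... rather, a single $n$-cycle requires a real constraint on how $C$ interleaves with $A$ and with $B$ simultaneously, and these two constraints are not independent. I would expect to handle this by choosing $C$ via a single global cyclic sequence on the vertices and reading off $P$, $Q$, checking connectedness of $A \cup Q$ and $B \cup P$ directly from that sequence; the symmetry bookkeeping (cyclic rotation, reflection, and the $A \leftrightarrow B$ swap) then produces the divisor, and one must be careful that no extra coincidences collapse the count further — that case analysis, distinguishing whether $G$ has any cycle fixed by the relevant symmetry, is where most of the work lies, and it is presumably also why the clean bound $(n-2)!/2$ for the all-even case degrades to $(n-2)!/4$ in the general theorem once odd cycles (which lack the canonical $2$-colouring) are allowed.
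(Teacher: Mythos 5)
Your set-up is the same as the paper's: decompose $G=A\cup B$ into the two alternating perfect matchings, and seek a compatible cycle of the form $C=P\cup Q$ with $P,Q$ perfect matchings on $V(G)$ such that $A\cup P$ and $B\cup Q$ (in your labelling, $B\cup P$ and $A\cup Q$) are Hamiltonian cycles; then $G\cup C=(A\cup P)\cup(B\cup Q)$ is the required decomposition. That reduction is correct and is exactly how the paper proceeds. The problem is that you stop at the point where the real work begins. The entire content of the theorem is the count, and the count hinges on the step you yourself flag as ``the delicate point'': given two fixed perfect matchings, how many third matchings make \emph{both} unions single $n$-cycles simultaneously, not just $2$-regular graphs. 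The paper handles this with two concrete lemmas: first, there are exactly $(n-2)!!$ matchings $P$ with $A\cup P$ a cycle (built greedily edge by edge); second, and this is the heart of the argument, for any two perfect matchings $B$ and $P$ there are at least $(n-3)!!$ matchings $Q$ with both $Q\cup B$ and $P\cup Q$ cycles. That second lemma is proved by induction on $n$: choose a $Q$-edge $vq$ with $q\notin\{v,b,p\}$ (at least $n-3$ choices), delete $v$ and $q$, reconnect their $B$-neighbours by a new $B$-edge and their $P$-neighbours by a new $P$-edge, apply the induction hypothesis on $n-2$ vertices, and lift the resulting $Q'$ back up, checking that the lifted unions remain single cycles and that distinct choices give distinct $Q$. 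Multiplying and dividing by $2$ for the $P\leftrightarrow Q$ symmetry gives $(n-2)!!\,(n-3)!!/2=(n-2)!/2$. Your proposal contains no substitute for this inductive argument.

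Moreover, the counting scheme you sketch in its place does not work as stated: you propose to count ``cyclic orderings threading through $A$ and $B$,'' starting from roughly $(n-1)!$ raw choices and dividing by rotation, reflection and the $A\leftrightarrow B$ swap. But the compatible cycles are a proper subset of all cycles cut out precisely by the two simultaneous connectedness constraints, so they cannot be obtained by dividing the total by symmetry factors; and even numerically, $(n-1)!$ divided by the symmetries you list does not equal $(n-2)!/2$. (Your closing guess about where the factor $4$ in the general theorem comes from is also not how the paper gets it --- there it arises from bandaging odd cycles and tracking the $A$/$B$ assignment of the reinserted vertices --- though that concerns the other theorem.) In short: right reduction, but the theorem's quantitative core --- the inductive construction guaranteeing enough matchings $Q$ compatible with both $B$ and $P$ at once, together with an injectivity argument for the choices --- is missing, and without it the bound $(n-2)!/2$ is not established.
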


 For $G$ as in the statement of the theorem, we will fix a decomposition of $G$ as $G = A \cup B$ with $A$ and $B$ perfect matchings on $G$, so that $G$ consists of $AB$-alternating cycles.
With this decomposition in mind we will prove the theorem with the help of two lemmas, as follows.  The first lemma simply counts how many ways there are to complete a perfect matching into a cycle.

\begin{lem}\label{lem P}
Let $A$ be a perfect matching on $n$ vertices.  There are $(n-2)!!$ perfect matchings $P$ on the same set of vertices such that $P \cup A$ is a cycle.
\end{lem}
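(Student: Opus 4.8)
The plan is to prove the count by an explicit bijection with objects that are trivial to enumerate. Since $A$ is a perfect matching, $n$ is even; write $n=2m$. First note that $P\cup A$ is automatically $2$-regular, being a union of two perfect matchings, so it is a cycle exactly when it is connected, i.e.\ a hamiltonian cycle containing every edge of $A$. Thus the lemma asks for the number of hamiltonian cycles $H$ on $V(G)$ with $A\subseteq H$.

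To break the dihedral symmetry of a cycle I would fix once and for all a vertex $v_0$, let $w_0$ be its $A$-neighbour, and for any vertex $x$ write $x'$ for its $A$-neighbour. Given an admissible $P$, traverse $H=P\cup A$ starting at $v_0$ and leaving along the $A$-edge $\{v_0,w_0\}$; the walk then alternates $A$-edge, $P$-edge, $A$-edge, $\ldots$, so since $H$ is hamiltonian it reads off as $v_0,w_0,x_1,x_1',x_2,x_2',\ldots,x_{m-1},x_{m-1}'$ before returning to $v_0$. This assigns to $P$ the sequence $(x_1,\dots,x_{m-1})$, and the claim is that this is a bijection onto the set of sequences of distinct vertices with $x_i\notin\{v_0,w_0\}$ and with no $x_i$ being the $A$-neighbour of another $x_j$ --- equivalently, an ordering of the $m-1$ edges of $A$ other than $\{v_0,w_0\}$ together with a choice, for each, of which endpoint plays the role of $x_i$. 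Injectivity is immediate since the sequence is recovered from $H$ by a deterministic rooted, oriented procedure; for surjectivity one checks that any such sequence, closed up by the $P$-edges $\{w_0,x_1\},\{x_1',x_2\},\dots,\{x_{m-2}',x_{m-1}\},\{x_{m-1}',v_0\}$, really does give a hamiltonian cycle: all $n$ vertices appear exactly once (here one uses that $A$ is a matching, so each $x_i'$ is automatically distinct from the earlier entries), and the listed $P$-edges form a perfect matching that is edge-disjoint from $A$.

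Granting the bijection, the count is immediate: $x_1$ is any of the $n-2$ vertices outside $\{v_0,w_0\}$, and once $x_1,\dots,x_{i-1}$ are chosen, $x_i$ may be any vertex other than $v_0,w_0,x_1,x_1',\dots,x_{i-1},x_{i-1}'$, leaving $n-2i$ choices; hence the total is $(n-2)(n-4)\cdots 2=(n-2)!!$. One can equivalently phrase this as $(m-1)!$ orderings of the remaining $A$-edges times $2^{m-1}$ endpoint choices, and $2^{m-1}(m-1)!=(n-2)!!$.

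The argument is routine, and I expect the only genuine obstacle to be the symmetry bookkeeping --- guaranteeing that each cycle arises from exactly one sequence --- which is precisely what rooting at $v_0$ and forcing the first step along $A$ accomplishes; one should also sanity-check the degenerate case $n=2$, where the sequence is empty, $P=A$, $P\cup A$ is the double edge (a $2$-cycle), and indeed $(n-2)!!=0!!=1$. A slicker but less self-contained alternative is induction on $m$: choosing $x_1$ in $n-2$ ways and then contracting $\{v_0,w_0\}$, the new $P$-edge, and $\{x_1,x_1'\}$ into a single edge reduces to the same problem on $n-2$ vertices, giving the recursion $f(n)=(n-2)\,f(n-2)$ with $f(2)=1$.
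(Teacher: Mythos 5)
Your proposal is correct and is essentially the paper's own argument: build the cycle by alternately following $A$-edges and choosing the next $P$-edge, giving $(n-2)(n-4)\cdots 2=(n-2)!!$ choices, with uniqueness because the rooted, oriented traversal of $P\cup A$ recovers the sequence of choices. Even your alternative phrasing as $(m-1)!$ orderings of the remaining $A$-edges times $2^{m-1}$ endpoint choices matches the paper's follow-up remark via contracting $A$.
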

\begin{proof}
Pick a vertex $v \in V(A)$. Let its neighbour in $A$ be $a$.  Starting from $v$, there are $n-2$ choices for a vertex $p$ which is distinct from $a$ and $v$.  Let $vp$ be an edge in $P$. Now let $a'$ be the neighbour of $p$ in $A$, and there are $n-4$ choices for a vertex $p'$ which is distinct from $v, a, p, a'$ which we also add to $P$.  Continuing likewise, we can extend the path $v,a,p,a',p',\ldots$.  For a final edge of $P$ take the edge from the end of the path (which by construction will not yet have an incident $P$-edge) to $v$.  Then $P$ is a perfect matching, $P\cup A$ is a cycle and there are $(n-2)!!$ choices for $P$ constructed in this manner.  Furthermore, all $P$ as in the statement can be constructed in this manner, as the cycle $P\cup A$ determines the choices.
\end{proof}

Another way to prove the previous lemma is to contract $A$, pick a cycle on the remaining $n/2$ vertices, and then note that this cycle can be expanded back to the original vertex set to give a $P$ as in the statement in $2^{n/2-1}$ ways, because after inserting the first edge of $A$ into the cycle, each remaining edge of $A$ can be inserted into the cycle in one of two ways.  Then since $(n-2)!! = 2^{n/2-1}(n/2-1)!$ for even $n$ we obtain the same result.

\begin{lem}\label{lem Q}
Let $B$ and $P$ be two perfect matchings on the same set of $n$ vertices.  Then there are at least $(n-3)!!$ choices for a perfect matching $Q$ on this vertex set with the property that both $Q \cup B$ and $P \cup Q$ are cycles.
\end{lem}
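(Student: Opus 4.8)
The plan is to adapt the greedy construction from the proof of Lemma~\ref{lem P}, building $Q$ one edge at a time so that $Q\cup B$ is forced to be a cycle, while imposing one extra restriction at each step that will guarantee $Q\cup P$ is a cycle too. Fix a vertex $v$, let $b$ be its $B$-neighbour, and grow a $B$--$Q$ alternating path $v \,\overset{B}{-}\, b \,\overset{Q}{-}\, q_1 \,\overset{B}{-}\, b_1 \,\overset{Q}{-}\, q_2 \,\overset{B}{-}\, \cdots$, where $b_j$ is the forced $B$-neighbour of $q_j$ and each $q_j$ is a vertex not yet on the path. After $n/2-1$ choices the path covers all $n$ vertices, and we close it with the $Q$-edge from its last vertex $b_{n/2-1}$ back to $v$. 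Exactly as in Lemma~\ref{lem P}, every such choice sequence yields a perfect matching $Q$ with $Q\cup B$ a hamiltonian cycle, and distinct sequences yield distinct $Q$ (the cycle $Q\cup B$, traversed from $v$ starting along the $B$-edge, recovers the sequence).

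The new ingredient is this: when choosing $q_i$, which adds the $Q$-edge $b_{i-1}q_i$ (with $b_0:=b$), we additionally require that this edge create no cycle in the graph $P\cup\{\text{$Q$-edges added so far}\}$; note this automatically forbids $q_i$ from being the $P$-neighbour of $b_{i-1}$, so no added $Q$-edge is a $P$-edge. I would then show by induction that, under this restriction, $P\cup\{\text{$Q$-edges so far}\}$ is always a vertex-disjoint union of paths: it is acyclic by the restriction and every vertex has degree one or two (one $P$-edge and at most one $Q$-edge), so each component is a path. Its degree-one vertices are precisely $v$, the current path-end $b_{i-1}$, and the vertices not yet on the path. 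Hence the path through $b_{i-1}$ has $b_{i-1}$ as one endpoint and either $v$ or a not-yet-used vertex as the other, with all of its internal vertices already used; in particular it contains at most one unused vertex. Since avoiding a cycle when adding $b_{i-1}q_i$ means exactly avoiding the other vertices of that path, at step $i$ there are at least $(n-2i)-1$ admissible choices for $q_i$: there are $n-2i$ unused vertices and at most one of them is forbidden.

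Multiplying over $i=1,\dots,n/2-1$, the construction produces at least $\prod_{i=1}^{n/2-1}\bigl(n-2i-1\bigr)=(n-3)(n-5)\cdots 3\cdot 1=(n-3)!!$ pairwise distinct matchings $Q$, each with $Q\cup B$ hamiltonian. It then remains to verify $Q\cup P$ is hamiltonian for each of them. Once all $Q$-edges except the final closing edge $b_{n/2-1}v$ have been added, $P\cup\{\text{these }Q\text{-edges}\}$ is acyclic, has $n$ vertices and $n-1$ edges and all degrees at most two, hence is a single path, whose two endpoints are its only degree-one vertices, namely $v$ and $b_{n/2-1}$. Adding the closing edge $b_{n/2-1}v$ turns it into a cycle through all $n$ vertices; for $n\ge 4$ the endpoints $v$ and $b_{n/2-1}$ are non-adjacent on that path, so the closing edge is not a $P$-edge, whence $P$ and $Q$ are edge-disjoint and $Q\cup P$ is a genuine hamiltonian cycle.

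I expect the main obstacle to be the bookkeeping in the inductive step: pinning down exactly which vertices carry a $Q$-edge at stage $i$, and thereby proving cleanly that the forbidden set (the component of $b_{i-1}$) is a path meeting the not-yet-used vertices in at most one point. Everything rests on the simple structural facts that a connected graph with all degrees at most two is a path or a cycle, and that acyclicity is preserved precisely by the cycle-avoiding restriction; once the degree-one vertices are correctly identified at each stage, the per-step count, the product, and the final check that $Q\cup P$ closes up are all routine.
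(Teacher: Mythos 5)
Your proof is correct, but it takes a genuinely different route from the paper. The paper proves Lemma~\ref{lem Q} by induction on $n$: it picks a first $Q$-edge $vq$ (at least $n-3$ choices), deletes $v$ and $q$, merges the freed matching edges into $b\beta\in B'$ and $p\pi\in P'$, applies the induction hypothesis on $n-2$ vertices, and then lifts $Q'$ back up, checking no repetition across the recursion. You instead extend the greedy construction of Lemma~\ref{lem P} into a single one-pass algorithm: grow the $B$--$Q$ alternating path as before (which forces $Q\cup B$ to be hamiltonian), and maintain the invariant that $P$ together with the $Q$-edges chosen so far is acyclic, hence a disjoint union of paths whose degree-one vertices are exactly $v$, the current path-end, and the unused vertices. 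Since the component of the path-end contains at most one unused vertex, each step has at least $n-2i-1$ admissible choices, giving $\prod_{i=1}^{n/2-1}(n-2i-1)=(n-3)!!$ matchings, distinct by the same traversal argument as in Lemma~\ref{lem P}, and the final closing edge turns the spanning path $P\cup Q_{\mathrm{partial}}$ into a hamiltonian cycle (and is not a $P$-edge, so the unions are genuine $n$-cycles). What each approach buys: yours is more explicitly algorithmic, avoids the base case, the contraction of $B$ and $P$, and the lifting/no-repetition bookkeeping, at the cost of treating $B$ and $P$ asymmetrically and carrying the forest invariant; the paper's recursion treats the two cycle conditions symmetrically and reduces to a strictly smaller instance of the identical statement, which makes the structure of the argument shorter to state once the lifting step is trusted. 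The only point to state carefully in your write-up is that ``creates a cycle'' is meant in the multigraph sense (a doubled edge counts as a $2$-cycle), consistent with the paper's edge-disjoint unions; with that convention your exclusion of the $P$-neighbour of the path-end and your edge-disjointness of $P$ and $Q$ both follow as you claim.
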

\begin{proof}
This proof is the main part of the whole argument for the general result. We proceed by induction. The base case is $n=4$ and the result follows by checking several cases: either $B=P$ on 4 vertices or $B\cup P$ is a cycle on 4 vertices. Since $n=4$ we only need to find $1=(4-3)!!$ perfect matching $Q$ with the desired properties.

In either case, we can simply draw at least one $Q$ no matter the choice of $P$, as illustrated in figure~\ref{fig all even base}.
\begin{figure}[h]
\centering
\includegraphics[scale=1]{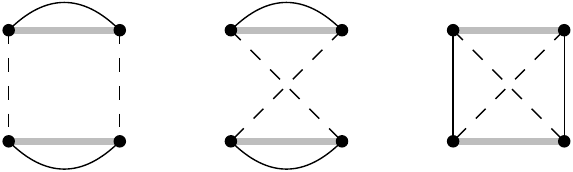}
\caption{The base case for the lemma. Thick grey edges are $B$, thin black edges are $P$, and thin dashed edges are the possibilities for $Q$.}\label{fig all even base}
\end{figure}

For the induction, let $V$ be the vertex set.  Pick a vertex $v$. Label its $B$-neighbour $b$ and its $P$ neighbour $p$. Pick a vertex $q$ not in $\{v,b,p\}$ and draw a $Q$-edge $vq$. There are at least $n-3$ choices for this $q$ (there may be more than $n-3$ choices as $v$, $b$, $p$ may not all be distinct).  Given such a choice of $q$, label its $B$-neighbour $\beta$ and its $P$-neighbour $\pi$. From here, we create two matchings on the vertex set $V' = V\setminus \{v,q\}$, namely $B' = (B|_{V'}) \cup \{b\beta\}$ and $P' = (P|_{V'}) \cup \{p\pi\}$. It should be noted that these are indeed matchings, since none of $b, p, \beta, \pi$ are saturated in the restrictions of their respective matchings to $V'$, and all of these matchings are also perfect. Now, $B'$ and $P'$ satisfy the induction hypothesis, and so give rise to $(n-2-3)!!$ choices of $Q'$ with the property that both $C'_B:= Q' \cup B'$ and $C'_P := P' \cup Q'$ are cycles.

The goal from here is to lift $Q'$ up to the perfect matching $Q := Q' \cup \{vq\}$ on $V$ and show that $Q$ satisfies the lemma. To this end, note that $C'_B\setminus \{b\beta\}$ and $C'_P\setminus \{p\pi\}$ induce paths $S_B$ and $S_P$ on $V$ which hit all of the vertices except $v$ and $q$. Therefore $S_B \cup \{bv, vq, q\beta\}$ is a cycle consisting of all of the edges of $B$ and $Q$, and $S_P \cup \{pv, vq, q\pi\}$ is a cycle consisting of all the edges of $P$ and $Q$. This means $Q$ satisfies the lemma.

Now, there were at least $n-3$ choices for the edge $vq$ and at least $(n-5)!!$ choices for the matching $Q'$. If there is no repetition here, we will have at least $(n-3)!!$ choices for $Q$ and the claim will be proven. To see that there is indeed no repetition, note that two different choices of $q$ cannot lead to the same cycle, and given the same choice of $q$, the paths $S_B$ and $S_P$ will depend only on the (already distinct) choices of $Q'$. This completes the proof.
\end{proof}

\begin{proof}[Proof of Theorem \ref{alleven}.]
Decompose $G$ as $G = A \cup B$ with $A$ and $B$ perfect matchings on $G$, so that $G$ consists of $AB$-alternating cycles.

By the first lemma we have $(n-2)!!$ perfect matchings $P$ such that $P\cup A$ is a cycle.  For each such $P$ then apply the second lemma to obtain $(n-3)!!$ perfect matchings $Q$ such that $Q\cup B$ and $P\cup Q$ are also cycles.

The compatible cycle thus constructed is $P\cup Q$, but each such compatible cycle can potentially appear twice as either of the two perfect matchings making it up could have been constructed first. The result, then, is at least $$\frac{(n-2)!!(n-3)!!}{2} = \frac{(n-2)!}{2}$$ compatible cycles as desired.
\end{proof}

\begin{figure}[h]
\centering
\includegraphics{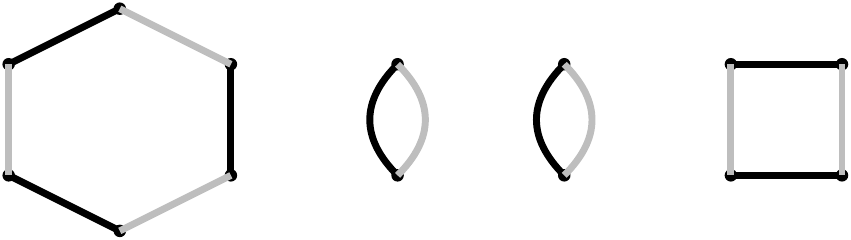}
\caption{An example graph $G$ with all even cycles decomposed as $A\cup B$ where thick black edges are $A$ and thick grey edges are $B$.}\label{fig even alg eg G}
\end{figure}

To illustrate how this theorem can be used algorithmically to construct compatible cycles consider the example graph in Figure~\ref{fig even alg eg G}.  By the first lemma we can construct the perfect matching $P$ by beginning at a vertex, say the upper of the two leftmost vertices in the figure, following $A$, in this case to the top vertex, and then choosing any vertex other than the two already mentioned to join to the top vertex making an edge for $P$.  Suppose we choose the lower of the two vertices to the right in the same cycle of $G$.  Then we follow $A$ again and pick any vertex not already seen to add a new edge to $P$ and so on.  Continuing in this way one possible $P$ we could obtain is as illustrated in Figure~\ref{fig even alg eg G and P}.

\begin{figure}[h]
\centering
\includegraphics{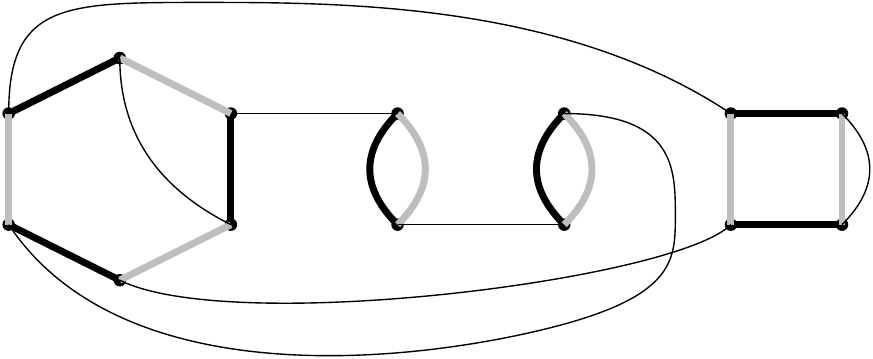}
\caption{The example graph $G$ along with a perfect matching $P$ (thin black lines) so that $P\cup A$ is a cycle.}\label{fig even alg eg G and P}
\end{figure}

\begin{figure}[h]
\centering
\includegraphics{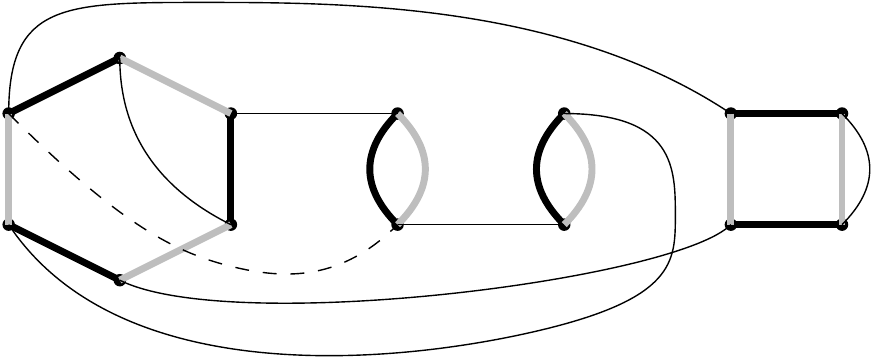}
\caption{$G$ and $P$ along with a first edge for the construction of $Q$ (dashed line).}\label{fig even alg eg G and first Q}
\end{figure}

Next we follow the second lemma.  Beginning again at the upper of the two leftmost vertices, we pick any vertex other than this vertex's neighbours in $B$ and $P$ to make an edge for $Q$.  In this case say we pick the lower vertex of the leftmost bubble.  This is illustrated in Figure~\ref{fig even alg eg G and first Q}.

\begin{figure}[h]
\centering
\includegraphics{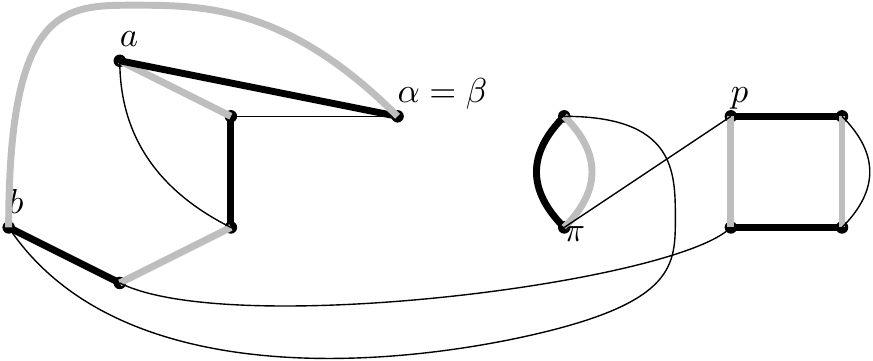}
\caption{The graph $G' = A'\cup B'$ with $P'$.}\label{fig even alg eg Gp}
\end{figure}

{}From this choice of edge the second lemma tells us to construct $G'$ (along with $P'$) as illustrated in Figure~\ref{fig even alg eg Gp}, with vertex labels as in the lemma.  

\begin{figure}[h]
\centering
\includegraphics{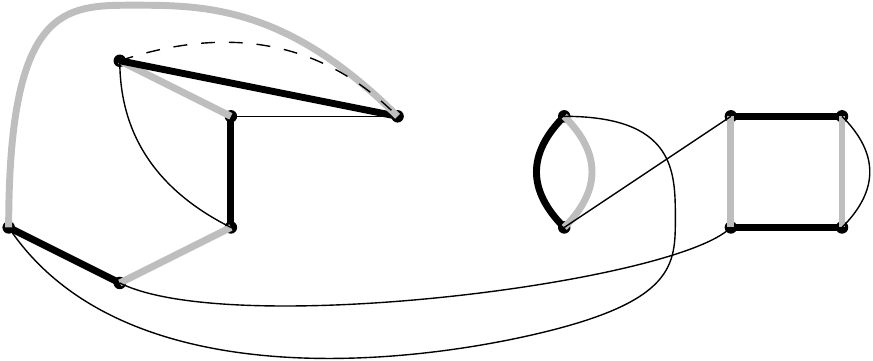}
\caption{The graph $G' = A'\cup B'$ with $P'$ and a first choice of edge for $Q'$.}\label{fig even alg eg Gp and first Q}
\end{figure}

\begin{figure}[h]
\centering
\includegraphics{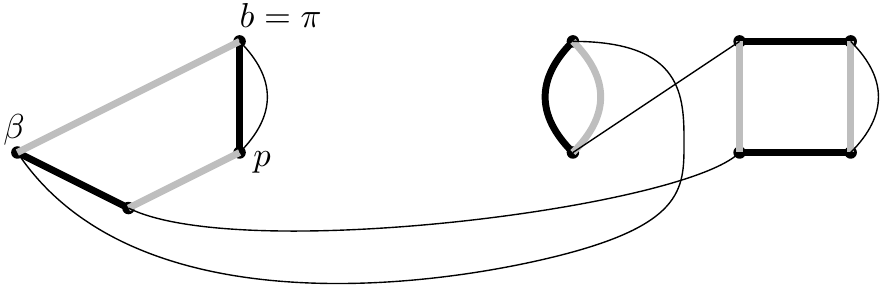}
\caption{The graph $G'' = A''\cup B''$ with $P''$.}\label{fig even alg eg Gpp}
\end{figure}

\begin{figure}[h]
\centering
\includegraphics{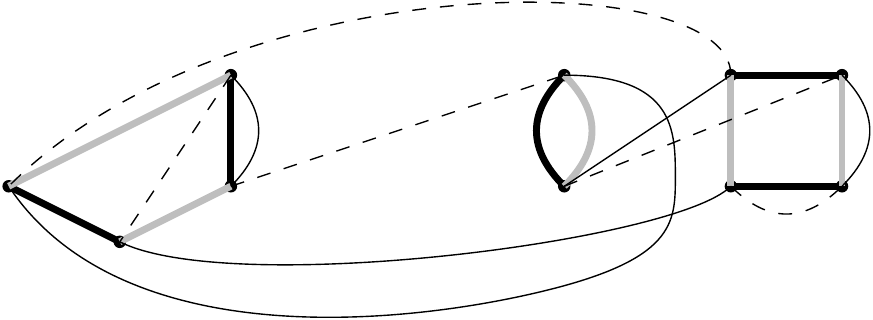}
\caption{The graph $G'' = A''\cup B''$ with $P''$ and $Q''$.}\label{fig even alg eg Gpp and Q}
\end{figure}

The process now continues.  Let's progress one more step explicitly, choosing the first edge of $Q'$ as shown in Figure~\ref{fig even alg eg Gp and first Q}.  This results in the graph $G''$ as illustrated in Figure~\ref{fig even alg eg Gpp}.  Continuing the process we can construct $Q''$; one possibility for $Q''$ is illustrated in Figure~\ref{fig even alg eg Gpp and Q}

\begin{figure}[h]
\centering
\includegraphics{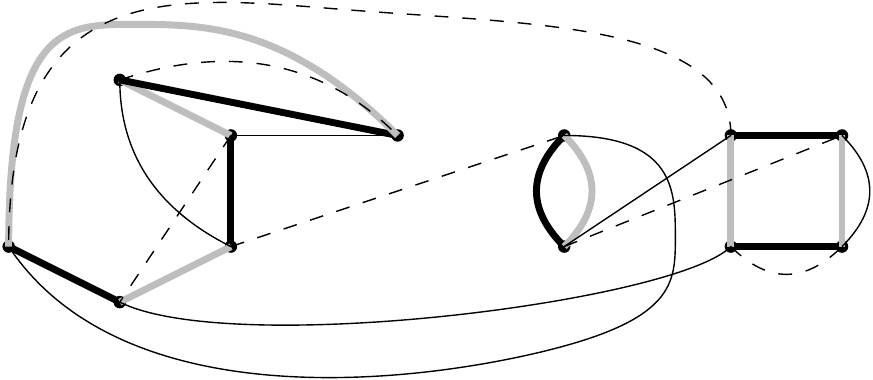}
\caption{The graph $G' = A'\cup B'$ with $P'$ and $Q'$.}\label{fig even alg eg Gp and Q}
\end{figure}

\begin{figure}[h]
\centering
\includegraphics{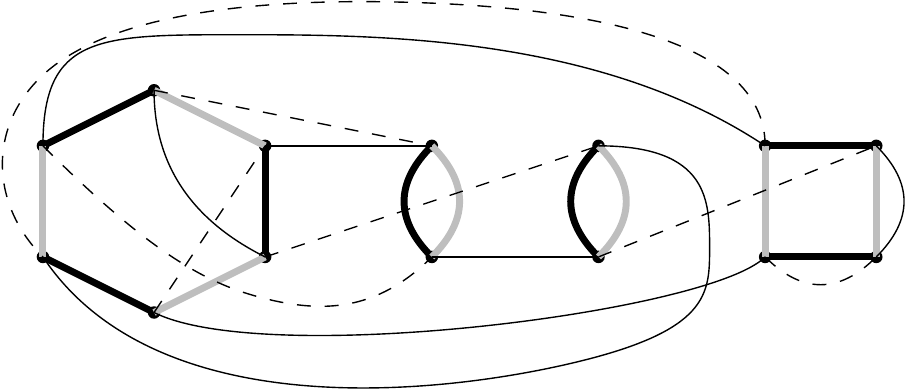}
\caption{The graph $G = A\cup B$ with $P$ and $Q$.  The thick black edges are $A$, the thick grey edges are $B$, the thin black edges are $P$ and the dashed edges are $Q$.}\label{fig even alg eg G and Q}
\end{figure}

Bringing $Q''$ up to $Q'$ on $G'$ we obtain the situation illustrated in Figure~\ref{fig even alg eg Gp and Q}, and bringing $Q'$ up to $Q$ on $G$ we obtain our compatible cycle $P\cup Q$ as illustrated in Figure~\ref{fig even alg eg G and Q}.  Observe that in this last figure, $A\cup P$, $B\cup Q$ and $P\cup Q$ are all cycles as expected.

As this example illustrates, the theorem in fact gives an algorithm to generate at least $(n-2)!/2$ compatible cycles for any 2-regular $G$ with all cycles even.

\begin{thm}
For an arbitrary $2$-regular graph $G$, there are at least $(n-2)!/4$ compatible cycles $C$ for $G$.  In the special case where $G$ has only even cycles then there are at least $(n-2)!/2$ compatible cycles.
\end{thm}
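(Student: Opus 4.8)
The plan is to reduce everything except the odd cycles of $G$ to Theorem~\ref{alleven}, and to absorb the odd cycles by a modification of Lemmas~\ref{lem P} and~\ref{lem Q}. If $G$ consists only of even cycles we are done by Theorem~\ref{alleven}, so suppose $G$ has $r\geq 1$ odd cycles $O_1,\dots,O_r$ (note $r\equiv n\pmod 2$). The proof of Theorem~\ref{alleven} does not apply verbatim because an odd cycle has no proper $2$-edge-coloring and hence $G$ has no decomposition into two perfect matchings. Instead I would fix a decomposition $G=A\cup B$ as follows: $2$-color the edges of the even cycles properly, and on each $O_i$ choose a vertex $c_i$, place \emph{both} edges of $O_i$ at $c_i$ into one color class, and alternate colors around the rest of $O_i$ (consistent exactly because $|O_i|$ is odd). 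This produces a partition $G=A\cup B$ in which $B$ is a matching covering exactly $V(G)\setminus\{c_1,\dots,c_r\}$, while $A$ is a disjoint union of single edges together with, for each $i$, a path of length two --- a \emph{cherry} --- centered at $c_i$. Thus $A$ has degree $2$ at each $c_i$ and degree $1$ elsewhere, $B$ has degree $0$ at each $c_i$ and degree $1$ elsewhere, and these deficiencies are complementary, which is what lets the construction close up.

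With this decomposition fixed I would run the two-lemma machine. The analogue of Lemma~\ref{lem P} counts perfect matchings $P$ on $V(G)\setminus\{c_1,\dots,c_r\}$ for which $A\cup P$ is a Hamiltonian cycle on $V(G)$; in any such cycle each $c_i$ is traversed through its two $A$-edges, so contracting each cherry to a single edge reduces the count to Lemma~\ref{lem P} on $n-r$ vertices and gives $(n-r-2)!!$ admissible $P$ (one may also rerun the ``follow $A$'' walk of Lemma~\ref{lem P} directly). The analogue of Lemma~\ref{lem Q} then counts, for each such $P$, the graphs $Q$ for which both $B\cup Q$ and $P\cup Q$ are Hamiltonian cycles; such a $Q$ is forced to have degree $2$ at each $c_i$ and degree $1$ elsewhere, so it must itself furnish a cherry at every $c_i$, and the freedom in placing those cherries is what should make up for the $P$-count being $(n-r-2)!!$ rather than $(n-2)!!$. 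For every admissible pair $(P,Q)$ the graph $C:=P\cup Q$ is a cycle spanning all $n$ vertices and $G\cup C=(A\cup P)\cup(B\cup Q)$, so $C$ is compatible with $G$ in the sense of Definition~\ref{comp}.

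To finish I would multiply the number of admissible $P$ by the number of admissible $Q$ per $P$ and divide by the multiplicity with which a given compatible cycle $C=P\cup Q$ is produced. Since the vertices $c_i$ are fixed and carry both $Q$-edges of $C$ at them, the splitting of $C$ into $P$ and $Q$ is essentially forced, so this multiplicity is a small constant; the resulting product $(n-r-2)!!\cdot[\#Q]$, divided by it, should collapse to $(n-2)!/4$, with the factor of two lost relative to the even case coming from a slightly weaker constant in the $Q$-count or from the handling of the cherries. Only the inequality $\geq(n-2)!/4$ is needed, so generous lower bounds suffice throughout.

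The main obstacle will be the cherry-enhanced analogue of Lemma~\ref{lem Q}: just as in Theorem~\ref{alleven}, where Lemma~\ref{lem Q} ``is the main part of the whole argument'', all the real difficulty is here. One wants to carry the ``delete a vertex pair $\{v,q\}$ and induct on $n-2$'' argument of Lemma~\ref{lem Q} through in the presence of the $c_i$: the inductive step is unchanged when neither deleted vertex is a $c_i$, but when a $c_i$ is involved one must track how its cherry gets completed and check that enough choices of $Q$ remain. Getting the multiplicities and constants to line up so that the bound comes out as exactly $(n-2)!/4$, rather than $(n-2)!$ over something larger, is the finicky part; the remainder parallels the proof of Theorem~\ref{alleven}.
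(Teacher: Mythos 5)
Your plan correctly identifies that the odd cycles are the obstruction and that the structural setup is consistent (with $A$ carrying a cherry at each $c_i$, $B$ and $P$ missing the $c_i$, and $Q$ forced to have degree $2$ at each $c_i$, all three unions are $2$-regular, so any admissible pair $(P,Q)$ does give a compatible cycle in the sense of Definition~\ref{comp}). But as written this is a plan, not a proof: the entire quantitative content is delegated to the ``cherry-enhanced analogue of Lemma~\ref{lem Q},'' which you state you would need but do not prove, and the final bound is asserted (``should collapse to $(n-2)!/4$'') rather than derived. The inductive step of Lemma~\ref{lem Q} genuinely breaks when a deleted vertex is a $c_i$ or is adjacent to one: $B$ does not saturate $c_i$, so the surgery $B'=(B|_{V'})\cup\{b\beta\}$ has no analogue there, and $Q$ must contribute two edges at $c_i$ rather than one, so the ``at least $n-3$ choices for $q$, then recurse on $n-2$ vertices'' count does not go through unchanged. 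You also need, and do not supply, an argument that the number of admissible $Q$ per $P$ is large enough to compensate for the $P$-count dropping from $(n-2)!!$ to $(n-r-2)!!$, together with a multiplicity analysis; none of these are routine, and without them the stated bound $(n-2)!/4$ is not established.

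For comparison, the paper avoids inventing a new inductive lemma altogether. It contracts (``bandages'') one edge at a chosen vertex $v_i$ of each odd cycle, obtaining an all-even graph $G'$ on $n-k$ vertices, applies Theorem~\ref{alleven} there to get $(n-k-2)!/2$ compatible cycles $C=P\cup Q$ with $G'=A\cup B$, and then re-inserts each $v_i$ by subdividing both its $G'$-edge $e_i$ and one edge of $C$ --- an edge of $Q$ if $e_i\in A$, an edge of $P$ if $e_i\in B$ --- which manifestly preserves all three cycles. Counting the insertion choices gives at least $(n-2)!/2^{k+1}$ cycles, and the lost powers of two are recovered by observing that the $2^{k-1}$ choices of assigning $e_2,\dots,e_k$ to $A$ or $B$ (with $e_1\in A$ fixed) can be read off from the resulting compatible cycle, hence yield distinct cycles, giving $(n-2)!/4$. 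If you want to salvage your route, the honest task is to state and prove the cherry version of Lemma~\ref{lem Q} with an explicit lower bound and an explicit multiplicity count; until then the proof is incomplete.
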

\begin{proof}
If $G$ has only even cycles, then apply the previous result.  Now assume $G$ has at least one odd cycle.

Let $O_1, \hdots, O_k$ be the odd cycles of $G$. Pick a vertex $v_i$ from each $O_i$. Now we will `bandage' these cycles at the $v_i$ in the sense that the $v_i$ will be treated just as a point along the `single edge' between their neighbours. Formally, define $G'$, the bandaged graph, to be the graph obtained from $G$ by contracting one of the edges incident to $v_i$ for each $1\leq i\leq k$; we no longer use the vertex labels $v_i$ in $G'$, as we think of the contracted vertices as having come from their other vertex in $G$ while $v_i$ is gone as it has been bandaged up.  Additionally let $e_i$ be the edge in $G'$ which came from the non-contracted incident edge to $v_i$ for each $1\leq i \leq k$.  We can then obtain $G$ from $G'$ by in each edge $e_i$ putting a new vertex $v_i$.

Applying the previous theorem we have $(n-k-2)!/2$ choices for $C$ on $G'$. Now we must extend $C$ to $G$. We would like to do this by, for each $v_i$ in turn, picking an edge $ww'$ of $C$ and replacing it with the edges $wv_i$ and $v_iw'$.  As we do so we increase the number of edges in $C$ and so increase the number of ways to continue this process in subsequent steps.

However, not every choice will preserve that $C$ is a compatible cycle.  Consider then a compatible cycle $C$ for $G'$ given by the previous theorem.  From that construction we have that $G'=A\cup B$ where $A$ and $B$ are perfect matchings (alternating along the cycles of $G'$), and $C=P\cup Q$ where $P$ and $Q$ are perfect matchings such that $P\cup A$ and $Q\cup B$ are also cycles.  Consider now $e_1$.  If $e_1\in A$ then for any $ww'$ in $Q$ we can add $v_1$ to $e_1$, letting both of the resulting edges be in $A$, and we can replace $ww'$ in $Q$ by $wv_1$ and $v_1w'$.  Then with these changes we still have $P\cup Q$, $P\cup A$ and $Q\cup B$ cycles.  Note that if $e_1\in A$ but $ww'$ were in $P$ then the same construction would result in $P\cup A$ not being a cycle.  However, $e_1\in B$ and $ww'\in P$ also results in $P\cup Q$, $P\cup A$ and $P\cup B$ remaining cycles.  Consequently we have $(n-k)/2$ choices for $ww'$.

Continuing with $e_2, e_3, \ldots$, the argument above did not require that $A,B,P,Q$ were matchings, and so whenever $e_i \in A$ we take $ww'\in Q$ and whenever $e_i\in B$ we take $ww'\in P$.  Also, whenever $ww'\in Q$ then $Q$ has one more edge after that step of the construction and whenever $ww'\in P$ then $P$ has one more edge after that step of the construction.  All together, then, we have
\[
\underbrace{\left(\frac{n-k}{2}\right)\left(\frac{n-k}{2}+1\right)\left(\frac{n-k}{2}+2\right)\cdots}_{\text{as many factors as $e_i\in A$}}\underbrace{\left(\frac{n-k}{2}\right)\left(\frac{n-k}{2}+1\right)\left(\frac{n-k}{2}+2\right)\cdots}_{\text{as many factors as $e_i\in B$}}
\]
choices to extend $C$ to a compatible cycle on $G$.  The expression above is bounded below by
\[
\frac{1}{2^k}\underbrace{(n-k)(n-k)(n-k+2)(n-k+2)\cdots}_{k \text{ times}} \geq \frac{1}{2^k} (n-k-1)(n-k)(n-k+1)(n-k+2)\cdots (n-2)
\]
Combining this with the number of choices for $C$ we get a total of at least
\[
\frac{1}{2^k} (n-k-1)(n-k)(n-k+1)(n-k+2)\cdots (n-2) \frac{(n-k-2)!}{2} =\frac{1}{2^{k+1}}(n-2)!
\]
compatible cycles for $G$.

To take care of the powers of 2, we need to more closely analyze the freedom we had in the initial choices.  Keeping the $v_i$ fixed, note that in the initial choice of decomposition of $G'$ into $A$ and $B$, each $v_i$ is either in $A$ or in $B$.  Let us fix a choice of $A$ and $B$ for $G'$ with $e_1 \in A$.  Suppose we have a compatible cycle $C$ for $G$ constructed as above based on this choice of $A$ and $B$.  Then the edges of $C$ alternate between $P$ and $Q$ except at the $v_i$ where two edges in the same set occur consecutively.  Since we know $e_1 \in A$ in $G'$, the construction above gives that $v_1$ is between two $Q$ edges in $C$.  Following the alternation of edges around $C$, starting with the $Q$-edges around $v_1$ we can determine for each $v_i$ whether it is surrounded by $P$-edges or $Q$-edges.  If a $v_i$ is surrounded by $P$-edges in $C$ then $e_i$ is a $B$ edge in $G'$ and if $v_i$ is surrounded by $Q$ edges in $C$ then $e_i$ is an $A$ edge in $G'$.

The argument of the previous paragraph implies that knowing a compatible cycle $C$ constructed as described above and knowing that $e_1\in A$ in $G'$ is enough to determine which $e_i$ are in $A$ and which in $B$ as edges in $G'$.  However, which $e_i$ are in $A$ and which are in $B$ comes from our initial choice of decomposition of $G'$ into $A$ and $B$.  Consequently, different choices of how the $e_i$ are assigned to $A$ and $B$ must give different compatible cycles $C$.  Since the argument required us to fix $e_1\in A$, it remains to choose which of $A$ or $B$ for the $e_i$ for $2\leq i \leq k$.  That is, there remain $k-1$ binary choices.  

Together with the construction given above for $C$, this means that we obtain a total of at least 
\[
\frac{2^{k-1}}{2^{k+1}} (n-2)! = \frac{(n-2)!}{4}
\]
compatible cycles for $G$.  

\end{proof}

\section{Connection to breakpoint graphs}

Counting compatible cycles is closely related to counting breakpoint graphs, which are certain graphs used in studying genomic rearrangements.  We will not need the definition of a breakpoint graph here (originally due to Bafna and Pevzner \cite{BPbreakpoint}), rather we consider the set up of Grusea and Labarre \cite{GLbreakpoint} which contains a reformualtion of the notion of breakpoint graph that already puts the problem closer to compatible cycle enumeration.

We need the following 
\begin{defn}
\mbox{}
    \begin{itemize}
        \item (\cite{GLbreakpoint} definition 5.2) Given vertices $\{0, 1, \ldots, 2m, 2m+1\}$, a \emph{configuration} is the union of two perfect matchings on those vertices, $\delta_B$ and $\delta_G$ where $\delta_G = \{\{2i, 2i+1\}: 0\leq i \leq m\}$
        \item (\cite{GLbreakpoint} definition 5.3) Given a configuration $\delta_B\cup \delta_G$, write $\overline{\delta}_G$ for the perfect matching $\overline{\delta}_G = \{\{2i-1, 2i\}: 1\leq i \leq m\} \cup \{2m+1, 0\}$ and let the \emph{complement} of the configuration $\delta_B\cup \delta_G$ be $\delta_B\cup \overline{\delta}_G$.
        \item (\cite{GLbreakpoint} definition 3.1) The \emph{signed Hultman number} $\mathcal{S}^\pm_H(m,k)$ is the number of signed permutations on $m$ elements whose breakpoint graph consists of $k$ disjoint cycles.
    \end{itemize}
    \end{defn}  
Then in place of the definition of a breakpoint graph for a signed permutation, we can use the following lemma.
\begin{lem}[\cite{GLbreakpoint} lemma 5.1]
A configuration $\delta_B \cup \delta_G$ is the breakpoint graph of a signed permutation $\pi$ if and only if $\delta_B\cup \overline{\delta}_G$ is a cycle.
\end{lem}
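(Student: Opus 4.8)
The plan is to peel back the definition of the breakpoint graph of a signed permutation and recognize $\delta_B$ and $\overline{\delta}_G$ as the two alternating matchings of a single cyclic word. First I would recall the construction: given a signed permutation $\pi = \pi_1\cdots\pi_m$ on $\{1,\ldots,m\}$, form its \emph{doubled sequence} $x_0, x_1, \ldots, x_{2m+1}$ by setting $x_0 = 0$ and $x_{2m+1} = 2m+1$ and, for each $1 \le i \le m$, placing in positions $2i-1,2i$ the pair $(2j-1, 2j)$ if $\pi_i = +j$ and the pair $(2j, 2j-1)$ if $\pi_i = -j$. By definition the gray edges of the breakpoint graph are the fixed matching $\delta_G$, and the black edges are $\delta_B = \{\{x_{2i}, x_{2i+1}\} : 0 \le i \le m\}$; in other words, $\delta_B$ is the set of ``even junctions'' of the cyclic word $x_0, x_1, \ldots, x_{2m+1}, x_0$.

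For the \emph{only if} direction I would analyze the ``odd junctions'' $\{x_{2i-1}, x_{2i}\}$, $1 \le i \le m$. Regardless of the sign of $\pi_i$ this set equals $\{2j-1, 2j\}$ with $j = |\pi_i|$, and since $i \mapsto |\pi_i|$ is a bijection of $\{1,\ldots,m\}$, the collection of odd junctions is exactly $\{\{2j-1,2j\} : 1 \le j \le m\}$. Adjoining the closing edge $\{x_{2m+1}, x_0\} = \{2m+1, 0\}$ gives precisely $\overline{\delta}_G$. Hence $\delta_B \cup \overline{\delta}_G$ is the edge set of the cyclic word $x_0, x_1, \ldots, x_{2m+1}, x_0$, which visits every vertex of $\{0, 1, \ldots, 2m+1\}$ once and is therefore a single cycle.

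For the converse, suppose $\delta_B$ is a perfect matching such that $\delta_B \cup \overline{\delta}_G$ is a cycle. As the union of two perfect matchings on the same $2m+2$ vertices it alternates between $\delta_B$-edges and $\overline{\delta}_G$-edges, and being a single cycle it uses all of them. Starting at vertex $0$ and leaving along its (unique) $\delta_B$-edge, traverse the cycle; this yields a sequence $x_0 = 0, x_1, \ldots, x_{2m+1}$ with $\{x_{2i}, x_{2i+1}\} \in \delta_B$ and $\{x_{2i-1}, x_{2i}\} \in \overline{\delta}_G$, and the last edge $\{x_{2m+1}, x_0\}\in\overline{\delta}_G$ forces $x_{2m+1} = 2m+1$ since $0$ is its only $\overline{\delta}_G$-partner. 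Each odd junction is an $\overline{\delta}_G$-edge other than $\{2m+1,0\}$, hence equals $\{2j-1, 2j\}$ for some $j$, with distinct $j$'s as $i$ varies; set $\pi_i = +j$ if $(x_{2i-1}, x_{2i}) = (2j-1, 2j)$ and $\pi_i = -j$ otherwise. Then $x_0, \ldots, x_{2m+1}$ is, by construction, the doubled sequence of the signed permutation $\pi = \pi_1\cdots\pi_m$, so its breakpoint graph has black edge set $\{\{x_{2i},x_{2i+1}\}:0\le i\le m\} = \delta_B$ and gray edge set $\delta_G$, i.e. it is $\delta_B \cup \delta_G$.

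The only genuinely delicate point — the thing to pin down before anything else works — is the bookkeeping of conventions: that the black edges sit on the \emph{even} junctions $\{x_{2i},x_{2i+1}\}$ and the $\overline{\delta}_G$-edges on the \emph{odd} ones, that the caps $x_0 = 0$ and $x_{2m+1} = 2m+1$ are forced rather than free, and that ``$\delta_B \cup \overline{\delta}_G$ is a cycle'' is read as ``this union, which is a priori a disjoint union of even cycles, is connected'', hence a single Hamiltonian cycle of length $2m+2$. Once those are fixed, both directions are a direct unwinding of the definitions, and indeed the counting that results (the number of valid $\delta_B$ is $2^{m-1}(m-1)!$ completions times the usual factor, matching $2^m m!$ signed permutations) is exactly the flavour of counting already used in Lemma~\ref{lem P}.
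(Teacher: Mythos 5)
The paper itself gives no proof of this statement: it is quoted verbatim from Grusea and Labarre (their Lemma 5.1) precisely so that the definition of a breakpoint graph can be black-boxed, and the lemma is then used as a substitute for that definition. Your proposal instead opens the black box and proves the lemma directly from the standard doubled-sequence construction, and the argument is correct: the even junctions $\{x_{2i},x_{2i+1}\}$ are by definition $\delta_B$, the odd junctions together with the closing edge $\{2m+1,0\}$ are exactly $\overline{\delta}_G$, so the doubled sequence read cyclically exhibits $\delta_B\cup\overline{\delta}_G$ as a Hamiltonian cycle; conversely, alternation in the union of two perfect matchings plus the fact that $\{2m+1,0\}$ is the unique $\overline{\delta}_G$-edge at $0$ forces the traversal starting at $0$ along $\delta_B$ to be a doubled sequence of a signed permutation, recovering $\delta_B\cup\delta_G$ as its breakpoint graph. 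What this buys is self-containedness, at the cost of committing to one specific convention for the doubling ($+j\mapsto(2j-1,2j)$, caps $0$ and $2m+1$, black edges on even junctions), which you rightly identify as the only delicate bookkeeping and which matches the $\delta_G$, $\overline{\delta}_G$ of the paper's transcription; the paper's route avoids this entirely by citing the source. One small blemish in your closing aside: the number of perfect matchings $\delta_B$ with $\delta_B\cup\overline{\delta}_G$ a cycle is, by Lemma~\ref{lem P} applied with $n=2m+2$ vertices, $(2m)!!=2^m m!$, which matches the count of signed permutations directly; your ``$2^{m-1}(m-1)!$ times the usual factor'' is not the right intermediate quantity, though this remark is decorative and does not affect the proof.
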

We also don't need the definition of a signed permutation, but merely the observations from \cite{GLbreakpoint} that a signed permutation on $m$ elements leads to a breakpoint graph on $\{0, 1, \ldots, 2m, 2m+1\}$, and the the map between signed permutations and breakpoint graphs is bijective.

\medskip

With this set-up, consider the all-bubbles case of our problem from the previous sections.  That is suppose $G$ consists of $n/2$ double edges which are vertex disjoint.  Label the vertices of $G$ as $\{0, 1, \ldots, 2m, 2m+1\}$ (where $m = (n-2)/2$) so that the double edges of $G$ run between $2i-1$ and $2i$ for $1\leq i \leq m$ and between $2m+1$ and $0$.  Then with notation as above $G= \overline{\delta}_G\cup \overline{\delta}_G$, where as usual the union denotes an edge-disjoint union, so taking the union of two copies of $\overline{\delta}_G$ gives double edges.
 
Now take any breakpoint graph with one cycle and call it $C$.  By Grusea and Labarre's lemma $C$ can be written as $\delta_B \cup \delta_G$ and $\delta_B\cup \overline{\delta}_G$ is a cycle.  In fact $C$ is a compatible cycle for $G$.  To see this note that we have $\delta_B\cup \overline{\delta}_G$ and $\delta_G\cup \overline{\delta}_G$ are both cycles, as is $\delta_B\cup \delta_G$ since we took a breakpoint graph with one cycle.  

Furthermore, all compatible cycles when $G$ consists of only bubbles can be obtained in this way because Grusea and Labarre's lemma says that a breakpoint graph with one cycle is exactly a graph where the above unions of matchings are cycles. 

According to Grusea and Labarre's results the number of such breakpoint graphs is $\mathcal{S}_H^{\pm}(n/2-1, 1)$.  Finally, we need to consider how many different labellings of $G$ would result in different families of breakpoint graphs.  This is asking, given $\overline{\delta}_G$ how many different $\delta_G$ could it correspond to.  This is exactly the problem solved in Lemma~\ref{lem P}, so there are $(n-2)!!$ choices.  As in the compatible cycle construction, this counts each compatible cycle twice since either of the two matchings making it up could be $\delta_G$.  All together this tells us that the number of compatible cycles to a graph $G$ consisting of $n/2$ bubbles is
\be\label{numB}
    \frac{1}{2}(n-2)!!\,\mathcal{S}_H^{\pm}(n/2-1,1).
\ee
In \cite{Cachazo:2015nwa} this formula was guessed based on explicit computations of initial terms along with the sequence A001171 in the OEIS \cite{OEIS}, but now, by the above, it is proven.

Note that this is better than our results of the previous section for the all bubbles case because it gives an exact enumeration.  For more general 2-regular $G$ with only even cycles there remains a connection to breakpoint graph enumeration, but it does not capture all possible compatible cycles.

To explore this more general situation, let $G$ be a 2-regular graph on $n$ vertices with $k$ cycles, all of even length.  Fix a decomposition of $G$ into two matchings.  By Grusea and Labarre's lemma, labelling $G$ so that it is a breakpoint graph is equivalent to finding a third matching which gives a cycle when combined with either matching from $G$.  By Lemma~\ref{lem Q} there are at least $(n-3)!!$ ways to do this.  Fix a labelling of $G$ so that $G$ is a breakpoint graph, and write $G=\delta_B \cup \delta_G$.  Now consider any breakpoint graph $H$ with one cycle (relative to the same labelling).  Then $H=\delta_{B'} \cup \delta_G$ and $\delta_{B'}\cup \overline{\delta}_G$ is a cycle.  Furthermore $\delta_B\cup \overline{\delta}_G$ is a cycle since $G$ is a breakpoint graph and $\delta_{B'}\cup \delta_G$ is a cycle since $H$ was chosen to have only one cycle.  So $C=\delta_{B'}\cup \overline{\delta}_G$ is a compatible cycle for $G$.

Not all compatible cycles for $G$ arise by breakpoint graphs.  However, all the ones constructed by the techniques of the previous section do arise from breakpoint graphs.  Despite this, we do not obtain an improved bound from Grusea and Labarre's results on signed Hultman numbers because when working with a fixed $G$, we are fixing not just $k$, the number of cycles, but also the lengths of each cycle.  This suggests enumerating a refined version of the signed Hultman numbers which keeps track of the cycle structure rather than just the number of cycles.  This could be interesting as combinatorics, and might yield better bounds on compatible cycles, or perhaps applications in breakpoint graphs.

\subsection{Lower bound vs. Exact count}

A natural question is to get an approximate notion of how close our bound is to the actual number of compatible cycles. While finding the exact number seems to be a difficult problem, our lower bound was obtained by using very simple constructions. Luckily, returning to the all bubble case for the moment, a formula for computing the number of breakpoint graphs with one cycle is given in \cite{GLbreakpoint} and therefore we can use it to compare it to our lower bound. Let $s=n/2$ be the number of double edges, i.e., bubbles. The formula for the number of breakpoint graphs $\mathcal{S}_H^{\pm}(s-1,1)$ given by
\be
\mathcal{S}_H^{\pm}(s-1,1) = \frac{2^{(3s-2)}s!(s-1)!^2}{(2s)!} + \sum_{a=1}^{s-1}(-1)^{s+1}s\!\!\!\sum_{b=1}^{{\rm min}(a,s-a)} (-1)^{a-b}T_{a, b, s}
\ee
where
\be
T_{a,b,s}:= \frac{2^{3(a-b)-1}(2a-2b+1)(a-1)!\left((2b)!(a-1)!(s-a-b+1)!\right)^2}{(s^2-(a-b+1)^2)(s^2-(a-b)^2)(s-a-b)!(2a-1)!(b-1)!\left((2b-1)b!\right)^2}.
\ee
In \cite{Cachazo:2015nwa}, the asymptotic behavior of $\mathcal{S}_H^{\pm}(n/2-1,1)$ was numerically studied and found to give the following number of compatible cycles
\be
\frac{1}{2}(n-2)!!\mathcal{S}_H^{\pm}(n/2-1,1)\sim \frac{\pi}{4}n(n-3)!.
\ee 
This means that for the all bubbles case the ratio of the exact count in the asymptotic regime to our lower bound, i.e. $(n-2)!/2$, is only $\pi/2\sim 1.57$.  

\section{Discussion}

In this work we presented a constructive proof of the existence of $(n-2)!/4$ compatible cycles to any 2-regular graph $G$. Moreover, when $G$ possesses only even cycles our lower bound becomes $(n-2)!/2$. Our construction has important applications in the computation of CHY integrals, which give rise to the map $\phi$ as review in section 2. While integrations involving two cycles $C_{\alpha}$ and $C_{\beta}$ compute amplitudes $m_n(\alpha|\beta)$ in a biadjoint scalar theory, more general CHY integrals are known to compute amplitudes in many other theories such as Yang-Mills and Einstein gravity \cite{Cachazo:2013hca}. These more general amplitudes require the integration of functions which are not of the simple form $m_n(\alpha|\beta)$. The more general integrals are associated with arbitrary 2-regular graphs, say $G_1$ and $G_2$, and the corresponding integration, $m_n(G_1|G_2)$, has to be performed.

Several techniques have been proposed in the literature for computing CHY integrals as the ones arising from $m_n(G_1|G_2)$. Some of them use the global residue theorem \cite{Baadsgaard:2015voa}, cross ratio identities \cite{Zhou:2017mfj}, deformations of the scattering equations \cite{Gomez:2016bmv}, etc. The technique relevant to our work expresses $m_n(G_1|G_2)$ directly in terms of the simple objects $m_n(\alpha|\beta)$ and requires finding $(n-3)!$ compatible cycles to $G_i$ such that under the map $\phi$ they generate a basis of $\mathbb{R}^{(n-3)!}$.

We have concentrated on the combinatorial part of the problem leaving the question of linear independence for the future. However there are some comments that can be made which follow from the Feynman diagram interpretation of $m_n(\alpha|\beta)$ and which show that the problem of independence is non-trivial even though our lower bound shows that for large $n$ the number of compatible cycles to any 2-regular graph is, at least, $n/4$ times larger than the size of the required basis. 

\subsection{Linear independence}

In order to show that the problem is non-trivial, let us consider a given cycle; without loss of generality choose the one defined by the canonical order, $C_{\mathbb{I}}$. We want to determine the total number of cycles such that the corresponding vectors under the map $\phi$ are orthogonal to $\phi(C_{\mathbb{I}})$. Let us denote the set of such cycles by $\ort (\mathbb{I})$. More explicitly,
\be
\ort (\mathbb{I}) := \{ C_\alpha \in {\cal O}\, : \, \phi(C_\alpha)\cdot \phi(\mathbb{I})=0 \}.
\ee
Recall that ${\cal O}$, defined in \eqref{defCycles}, denotes the set of all cycles. The reason $\ort (\mathbb{I})$ is interesting is that no subset of cycles in $\ort (\mathbb{I})$, including the whole set, can possibly give a basis of $\mathbb{R}^{(n-3)!}$. This is clear as they would not be able to generate the vector $\phi(\mathbb{I})$ which is orthogonal to that space.

Let us determine the size of $\ort (\mathbb{I})$. Start by recalling that there are $(n-1)!/2$ cycles for $n$ labels and that $m_n(\mathbb{I}|\alpha) =  \phi(\mathbb{I})\cdot \phi(C_\alpha)$ can be computed using Feynman diagrams, \eqref{mdef}, i.e. 
\be
m_n(\mathbb{I}|\alpha) := (-1)^{w(\mathbb{I}|\alpha)}\sum_{\Gamma\in \Omega(\mathbb{I})\bigcap\Omega(\alpha)}R_\Gamma(S).
\ee
It is known that there are $\texttt{C}_{n-2}$ Feynman diagrams which are planar with respect to a given order, where the $\texttt{C}_m$ are the standard Catalan numbers. One way to see this is that there is a bijection between planar cubic Feynman diagrams and triangulations of an $n$-gon. These can also be thought of as the vertices of an associahedron. Finding the diagrams that are shared by two orderings is equivalent to finding the intersection of two associahedra. The set of all such intersections with the canonical order associahedron corresponds to all possible subdivisions of an $n$-gon.

Luckily, the number of all such subdivisions is also well-known and it is given by the super Catalan or Schröder–Hipparchus numbers $\texttt{S}_{n}$. The first few corresponding to $n=4,5,6,7,8$ are $3, 11, 45, 197, 903, 4279$ respectively (see e.g. the sequence A001003 in the OEIS \cite{OEIS}).

Having found the number of cycles that give a non-trivial intersection with the canonical order, the complement, i.e., the number of orthogonal cycles is then given by 
\be
|\ort (\mathbb{I}_n)|= \frac{(n-1)!}{2} - \texttt{S}_{n}.
\ee
The number $|\ort (\mathbb{I}_n)|$ gives a lower bound on the number of cycles that can be chosen without succeeding to construct a basis of $\mathbb{R}^{(n-3)!}$. Let us see how this compares to $(n-2)!/2$, our lower bound on the number of compatible cycles when $G$ only has even cycles.

Let us consider the asymptotic behavior of the super Catalan numbers,
\be
\log(\texttt{S}_{n})\sim n\, \log\left(3+\sqrt{8}\right)-\frac{3}{2}\log(n) + {\cal O}(n^0). 
\ee

This number is very small compared to the total number of cycles $\log((n-1)!/2)\sim n\,\log (n)-n-1/2\log(n)+{\cal O}(n^0).$ 

Clearly, $|\ort (\mathbb{I}_n)| = |\ort (\alpha)|$ for any ordering $\alpha$ as can be seen from the definition of the map $\phi$ reviewed in section 2 and applied to cycles in \eqref{defPhi}. This means that the $(n-1)!/2\times (n-1)!/2$ matrix $m_n(\alpha|\beta)$ is very sparse when $n$ is large and it is increasingly difficult to find a basis of the space. 

This sparsity is even stronger than that expected from the block diagonal shape of the so-called KLT kernel \cite{Bern:1998ug}. Consider a basis for $\alpha$ of the form $(1,\omega ,n-1,n)$ and one for $\beta$ of the form $(1,\gamma_1, n,\gamma_2,n-1)$ with $|\gamma_1|-|\gamma_2|=(n+1\!\mod 2)$. In this case the $(n-3)!\times (n-3)!$ matrix ${\cal S}^{\rm KLT}_{\alpha,\beta}:=m_n^{-1}(\alpha|\beta)$ is known to be block diagonal with blocks of size $d\times d$ with $d=([n/2]-1)!([n/2]-2)!$ if $n$ is even and $d=(([n/2]-2)!([n/2]-2)!)$ if $n$ is odd. The blocks are completely solid, i.e., they do not possess any vanishing entries. Of course, the $(n-3)!\times (n-3)!$ matrix $m_n(\alpha|\beta)$ is also block diagonal. However, somewhat unexpectedly each block becomes sparse already for $n\sim 40$. Moreover, the sparsity increases as $n$ does since the ratio of $\texttt{S}_{n}$ to the size of a single block tends to zero as $n$ grows.  

The behavior of the linear relations among the vectors $\phi(C_\alpha)$ as $n$ grows can have important consequences not only for the construction considered in this work but also for the KLT procedure which connects theories such as Yang-Mills and gravity. The study of linear dependencies is an area in mathematics known as matriod theory \cite{ox}. The collection of all vectors $\phi(C_\alpha)$ defines a matroid of rank $(n-3)!$ on a ground set of $(n-1)!/2$ elements. For $n=4,5$ one has what is known as the uniform matroids $U_{1,3}$ and $U_{2,12}$ respectively. For $n>5$ the matroids have much more structure. For example, for $n=6$ we have found that there are $126,820$ bases for the submatroid defined by the $24$ orderings $\alpha =(1,2,\omega)$ with $\omega$ a permutation of $\{3,4,5,6\}$ which form what is known as the Kleiss-Kuijf set of orderings \cite{Kleiss:1988ne}.  

We leave a more in depth study of linear independence, asymptotic structure of the matrix $m_n(\alpha,\beta)$ and the matroids defined by the map $\phi$ to future work. 

\subsection{Outlook}

According to the numerical data gathered in \cite{Cachazo:2015nwa}, when the number of vertices $n$ is fixed, graphs with the largest number of cycles always have the smallest number of compatible cycles. When $n$ is even, such graphs are those with $n/2$ cycles and the number of compatible cycles was determined in section 4 from the connection to breakpoint graphs. If the behavior found in \cite{Cachazo:2015nwa} is correct, then it is clear that studying 2-regular graphs with only two cycles should be a natural starting point for the construction of a basis of compatible cycles, i.e., a set of $(n-3)!$ linearly independent vectors.  

This observation suggests a natural generalization to the proposal of \cite{Cachazo:2015nwa} for computing $m_n(G_1|G_2)$ in which the procedure is carried out in steps determined by the number of cycles in $G_i$. 

Start with the set of all 2-regular graphs with only two cycles ${\cal G}^{\hbox{{\small 2-reg}}}_{\rm 2\, cycles}$ and then compute all possible vectors $\phi(G)$ with $G\in {\cal G}^{\hbox{{\small 2-reg}}}_{\rm 2\, cycles}$ as a linear combination of vectors $\phi(C_\alpha)$ using their basis of compatible cycles, assuming it exists. Once this is done one can extend the set of compatible cycles to include ``compatible graphs" with $2$ cycles. 

\begin{defn}
Given a 2-regular graph $G$, a \emph{compatible graph} to $G$ is a 2-regular graph $B$ with a single or two cycles such that the 4-regular graph obtained by edge-disjoint union $G\cup B$ admits a decomposition of the form $G\cup B = B_1\cup B_2$ where $B_1$ and $B_2$ are both graphs with a single or two cycles.
\end{defn}

This means that the main problem can also be modified accordingly. 

\begin{problem}\label{main problem version two}
Given a 2-regular graph $G$ on $n$ vertices, find at least $(n-3)!$ compatible graphs such that under $\phi$ they form a basis of $\mathbb{R}^{(n-3)!}$.
\end{problem}

Clearly the set of compatible graphs to a given 2-regular graph is larger than the number of compatible cycles. Therefore, even if finding a set of $(n-3)!$ linearly independent vectors gets harder as $n$ increases, as suggested by the discussion above, one can compensate by enlarging the set to compatible graphs.  

This notion can be further extended to recursively include graphs with three, four cycles, etc. It would be very interesting to explore this further and the connection of this more general notion of compatibility with breakpoint graphs with more cycles.

\section*{Acknowledgements}
We would like to thank A. Guevara and S. Mizera for useful discussions and J. Bourjaily for bringing \cite{Bjerrum-Bohr:2016axv} to our attention. This research was supported in part by Perimeter Institute for Theoretical Physics. Research at Perimeter Institute is supported by the Government of Canada through the Department of Innovation, Science and Economic Development Canada and by the Province of Ontario through the Ministry of Research, Innovation and Science. KY is supported by an NSERC Discovery grant and was supported during this research by a Humboldt fellowship from the Alexander von Humboldt foundation.

\renewcommand{\thefigure}{\thesection.\arabic{figure}}
\renewcommand{\thetable}{\thesection.\arabic{table}}
\appendix

%\section{Appendix}
%\subsection{Extra Computations}

\bibliographystyle{JHEP}
\bibliography{references}

\end{document}